\documentclass[12pt]{amsart}

\usepackage{amssymb}
\usepackage{color}

\usepackage[hmargin=2.5cm,vmargin=2.5cm]{geometry}



\usepackage{graphicx}

\newtheorem{theorem}{Theorem}[section]
\newtheorem{lemma}[theorem]{Lemma}
\newtheorem{corollary}[theorem]{Corollary}

\newtheorem{definition}[theorem]{Definition}

\theoremstyle{remark}
\newtheorem{remark}[theorem]{Remark}
\newtheorem{example}[theorem]{Example}

\newcommand{\R}{\mathbb{R}}
\providecommand{\C}{\mathbb{C}}
\renewcommand{\C}{\mathbb{C}}
\newcommand{\Z}{\mathbb{Z}}

\newcommand{\p}{\partial}

\newcommand{\spaceH}{\mathcal{X}}
\newcommand{\groupS}{\mathcal{S}}
\newcommand{\groupR}{\mathcal{R}}

\newcommand{\cc}[1]{\overline{#1}}
\newcommand{\abs}[1]{\vert #1\vert}

\newcommand{\at}[1]{\vert\sb{#1}}
\newcommand{\Vbar}{\cc{V}}

\renewcommand\kappa{\varkappa}
\renewcommand\epsilon{\varepsilon}

\DeclareMathOperator{\Ind}{Ind}
\DeclareMathOperator{\Span}{span}
\DeclareMathOperator{\loc}{loc}

\usepackage{bm}

\title{Symmetry and Dirac points in graphene spectrum}

\author{Gregory Berkolaiko}
\address{Department of Mathematics, Texas A\&M University, College
 Station, TX 77843-3368, USA}
\author{Andrew Comech}

\begin{document}

\begin{abstract}
 Existence and stability of Dirac points in the dispersion relation
 of operators periodic with respect to the hexagonal lattice is
 investigated for different sets of additional symmetries.  The
 following symmetries are considered: rotation by $2\pi/3$ and
 inversion, rotation by $2\pi/3$ and horizontal reflection, inversion
 or reflection with weakly broken rotation symmetry, and the case
 where no Dirac points arise: rotation by $2\pi/3$ and vertical
 reflection.  
 
 All proofs are based on symmetry considerations.  In particular,
 existence of degeneracies in the spectrum is deduced from the
 (co)representation of the relevant symmetry group.  The conical
 shape of the dispersion relation is obtained from its invariance
 under rotation by $2\pi/3$.  Persistence of conical points when the
 rotation symmetry is weakly broken is proved using a geometric phase
 in one case and parity of the eigenfunctions in the other.
\end{abstract}

\maketitle

\section{Introduction}
\label{sec:intro}

Many interesting physical properties of graphene
\cite{Nov_nob10,CasGraphen_rmp09,Katsnelson_graphene,FefWei_cmp14},
are consequences of presence of special conical points in the
dispersion relation, where its different sheets touch to form a
conical singularity.  These points are often referred to as Dirac
points or as diabilical points.

Most mathematical analyses of the dispersion relation of graphene are
performed in physics literature in the tight-binding approximation,
starting from the work of Wallace \cite{Wal_pr47} and Slonczewski and
Weiss \cite{SloWei_pr58}.  This is equivalent to modeling the material
as a discrete graph with vertices at the carbon molecules' locations
and with edges indicating chemical bonds.  A richer mathematical model
for graphene was considered by Kuchment and Post in
\cite{KucPos_cmp07}, who studied
honeycomb quantum graphs with even potential on edges.

The Schr\"odinger operator $H_\epsilon = -\Delta + \epsilon
q(\vec{x})$
in $\R^2$
with the real-valued potential $q(\vec{x})$ that has
honeycomb symmetry was considered by Grushin \cite{Gru_mn09}.  A
condition for a multiple eigenvalue to be a conical point was
established and checked in the perturbative regime of a weak potential
(small $\epsilon$).  The multiplicity two of the eigenvalue was proved
from the symmetry point of view, an approach that we fully develop
here.

The case of potential of arbitrary strength was studied by Fefferman
and Weinstein \cite{FefWei_jams12} (see also \cite{FefLeeWei_prep16}
for further results).  The results of \cite{FefWei_jams12} can be
schematically broken into three parts: (a) establish that the
dispersion relation has a double degeneracy at certain known values of
quasi-momenta; (b) establish that for almost all $\epsilon$ the
dispersion relation is conical in the vicinity of the degeneracy; (c)
prove that the conical singularities survive under weak perturbation
which destroys some of the symmetries of the potential (namely, the
rotational symmetry).  These results are contained in \cite[Thms
5.1(1), 4.1 and 9.1]{FefWei_jams12} with proofs which are rather
technical.

The purpose of this article is to make explicit the role of symmetry
in existence and stability of Dirac points and to give proofs that are
at the same time simpler and more general.  Our methods apply to many
different settings: graphs (discrete or quantum) and Schr\"odinger and
Dirac operators on $\R^2$.  We use Schr\"odinger operator as our
primary focus, and give numerical examples based on discrete graphs.  We
also consider the effect of different symmetries, substituting
inversion symmetry, usually assumed in the literature, with horizontal
reflection symmetry (the results are analogous or stronger, as
explained below).

We will now briefly review our results and the methods employed.  The
Schr\"odinger operator is assumed to be shift-invariant with respect
to the hexagonal lattice.  We also consider the following symmetries
(see Fig.~\ref{fig:lattice_symm} for an illustration): rotation by
$2\pi/3$ (henceforth, ``rotation''), inversion (reflection with
respect to the point $(0,0)$), horizontal reflection and, to a lesser
extent, vertical reflection.  We remark that horizontal and vertical
reflections are substantially different because the hexagonal lattice
is not invariant with respect to rotation by $\pi/2$.  We study the
question of existence and stability of Dirac points
when the operator has various subsets of the above symmetries.

We show that existence of the degeneracy is a direct consequence of
symmetries of the operator.  The natural tool for studying this is, of
course, the representation theory.  It is well known that existence of
a two- (or higher-) dimensional irreducible representation suggests
that some eigenvalues will be degenerate.  However, rotation combined
with inversion --- the most usual choice of symmetries
\cite{Gru_mn09,FefWei_jams12} --- is an abelian group, whose
irreps are all one-dimensional.  The resolution of this
question lies in the fact that the relevant symmetry is the inversion
combined with complex conjugation and one should look at
representations combining unitary and antiunitary operators, the
so-called \emph{corepresentations} introduced and fully classified by
Wigner \cite[Chap.~26]{Wigner_group_theory}.

To prove
the existence of
the degeneracy in the spectrum
(Lemma~\ref{lemma:mult}) we identify the 2-dimensional
(co)representation responsible for it and describe the subspace of the
Hilbert space that carries this representation.  We also relate our
results to the proofs of isospectrality, in particular the
isospectrality condition of Band--Parzanchevski--Ben-Shach
\cite{BanParBen_jpa09,ParBan_jga10}.

The conical nature of the dispersion relation is known to be generic
(see, for example, \cite[Appendix 10]{Arnold_mat_methods}); to prove
this in a particular case one uses perturbation theory, as done in
\cite{Gru_mn09} and, implicitly, in \cite{FefWei_jams12}.  Again, we
seek to make the effect of symmetry most explicit here.  This is done
on two levels.  First, in Lemma~\ref{lemma:sym_disp} and
Lemma~\ref{lem:symm_disp} we show that the dispersion relation also
has rotational symmetry and thus, by Hilbert-Weyl theory of invariant
functions, is restricted to be a circular cone (which could be
degenerate) plus higher order terms.  Then, in
Lemma~\ref{lem:intertwiner}, we show that the symmetries also enforce
certain relations on the first order terms of the perturbative
expansion of the operator, which restricts the possible form of the
terms.  In spirit, this conclusion parallels the Hilbert-Weyl theory,
but is more powerful: for example, it allows us to conclude that at
quasi-momentum $\vec{0}$, where we discover persistent degeneracies
with only the rotational symmetry, the dispersion relation is locally
flat.

Part (c) of the above classification, the survival of the Dirac points
when a weak perturbation breaks the rotational symmetry, can be
established by perturbation theory and implicit function theorem, as
done in \cite{FefWei_jams12}.  However, such resilience of
singularities indicates that there are topological obstacles to their
disappearance \cite{ManGuiVoz_prb07,Man_prb12,MonPan_jsp14}.  The
method familiar to physicists is to use the Berry phase
\cite{Ber_prsla84,Sim_prl83}, which works when the operator has
inversion symmetry (Section~\ref{sec:keepingV}).  Interestingly, when
instead of inversion symmetry we have horizontal reflection symmetry,
Berry phase is \emph{not} restricted to the integer multiples of $\pi$
and the topological obstacle has a different nature.  The survival of
the Dirac cone is shown to be a consequence of the structure of
representation of the reflection symmetry
(Section~\ref{sec:keepingF}), which combines eigenfunctions of
different parities at the degeneracy point.  As a consequence of our
proof we conclude that the perturbed cone, although shifted from the
corner of the Brillouin zone, remains on a certain explicitly defined
line.  In particular, this restricts the location of points in the
Brillouin zone where Dirac cones can be destroyed by merging with
their symmetric counterparts.  Naturally, this effect is also present
when there is horizontal reflection symmetry \emph{in addition} to the
inversion symmetry.  We remark that experimentally created potentials
usually possess the reflection symmetry,
\cite{BahPelSeg_ol08,Tar+_n12}.

In connection with the survival of the Dirac points, we would like to
mention the complementary result of by Colin de Verdi\`ere in
\cite{CdV_msmf91}, who considered the Schr\"odinger operator
$H_\epsilon = -\Delta + \epsilon q(\vec{x})$ with $q(\vec{x})$
periodic, real and inversion-symmetric, but \emph{not}
$2\pi/3$-rotation invariant.  In this case, for small $\epsilon$,
there are also conical singularities of the dispersion in the vicinity
of the same special quasi-momenta.  The proof uses the transversality
condition of von Neumann--Wigner \cite{vNeWig_pz29} and Arnold
\cite{Arn_fap72}.  The method of \cite{CdV_msmf91} or, on a more basic
level, the implicit function theorem, could also be used to prove our
results, but we feel that the Berry phase technique is both beautiful
and relatively unknown in the mathematics literature and thus deserves
an appearance.

To summarize, in addition to providing simpler and shorter
symmetry-based proofs to existing results, we discover some previously
unknown consequences.  In particular, we consider the case of
rotational symmetry coupled with horizontal reflection symmetry; in
this case, when the rotational symmetry is weakly destroyed, the
conical points travel on a special line.  We observe degeneracies at
quasi-momentum $\vec{0}$ in presence of rotational symmetry only; the
dispersion relation at this point is shown to be locally flat.
Finally, we explain why the coupling of rotation and vertical
reflection \emph{does not}, in general, lead to the appearance of
Dirac points.  The tools developed in this article would be easily
extensible to other lattice structures \cite{DoKuc_nsmmta13} and
graphene superlattices \cite{Yan+_np12,Pon+_n13}.

\subsection{Symmetries}

The periodicity lattice of the operators that we consider is the
2-dimensional hexagonal lattice $\Gamma$ with the basis vectors
\begin{equation}
  \label{eq:lattice_vectors}
  \vec{a_1} = 
  \begin{pmatrix}
    \sqrt{3}/2 \\
    1/2
  \end{pmatrix},
  \qquad 
  \vec{a_2} = 
  \begin{pmatrix}
    \sqrt{3}/{2} \\ -1/2
  \end{pmatrix},
\end{equation}
see Fig.~\ref{fig:lattice_symm}(a).  The operator considered will
always be assumed to be invariant with respect to the shifts by this
lattice.

In addition to the shifts, the lattice $\Gamma$ has several other
symmetries.  We now describe some of them as operators acting on
functions on $\R^2$ (or on a graph embedded into $\R^2$).

\begin{figure}[t]
  \centering
  \includegraphics[scale=0.9]{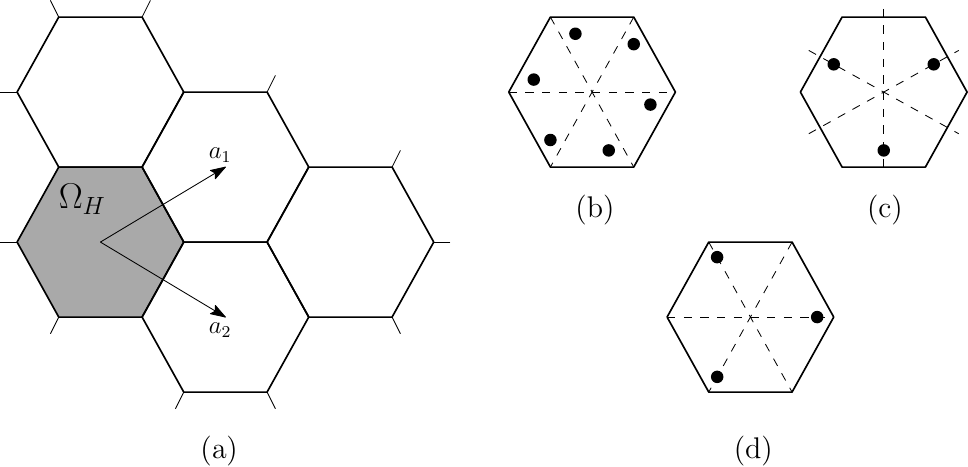}
  \caption{Hexagonal lattice (a) and examples of fundamental domains
    with symmetry $R$ and, additionally, (b) inversion symmetry $V$,
    (c) horizontal reflection symmetry $F$ and (d) vertical reflection
    symmetry $F_V$.  Note that we do not expect conical points in
    operators with symmetries $R$ and $F_V$, see
    section~\ref{sec:isotypic_RF_V}.}
  \label{fig:lattice_symm}
\end{figure}

\begin{itemize}
\item Rotation $R$ by $2\pi/3$ in the positive
  (counter-clockwise) direction:
  \begin{equation*}
    R: \psi(x_1, x_2) 
    \mapsto \psi\left( -\frac{1}{2}x_1 + \frac{\sqrt{3}}{2}x_2,
      - \frac{\sqrt{3}}{2}x_1 - \frac{1}{2}x_2\right) = \psi\left( M_R
    \vec{x} \right).
    \end{equation*}
\item Inversion $V$:
  \begin{equation*}
    V: \psi(x_1, x_2) \mapsto \psi(-x_1, -x_2).
  \end{equation*}
\item Horizontal reflection $F$:
  \begin{equation*}
    F : \psi(x_1, x_2) \mapsto \psi(-x_1, x_2).
  \end{equation*}
\end{itemize}

Note that $R$ and $V$ together form the abelian group of rotations by
multiples of $\pi/3$.

We denote by $C$ the
antiunitary operation of
taking complex conjugation (or ``time-reversal'' in physics
terminology),
  \begin{equation}\label{def-C}
    C : \psi(x_1, x_2) \mapsto \overline{\psi(x_1, x_2)}.
  \end{equation}
In what
follows, we will assume our operator has symmetries generated by a
subset of the following: complex conjugation $C$, rotation $R$, reflection $F$,
conjugate inversion $\Vbar=VC$.

As the base operator (i.e. before we apply Floquet-Bloch analysis) we
will always take an operator with real coefficients, thus it will be
symmetric with respect to complex conjugation.  As it turns out, an
important role is played by the product of inversion and complex
conjugation, known as the $\mathcal{PT}$ (parity-time) transformation:


Finally, we will also consider the vertical reflection symmetry:
\begin{itemize}
\item Vertical reflection $F_V$:
  \begin{equation*}
    F_V : \psi(x_1, x_2) \mapsto \psi(x_1, -x_2).
  \end{equation*}
\end{itemize}
Its effect is not the same as that of the horizontal reflection $F$
because the two symmetries are aligned differently with respect to the
lattice $\Gamma$.  In fact, in contrast to $F$, the presence of $F_V$
(in addition to symmetry $R$) does not generally lead to the
appearance of conical points in the dispersion relation.  This
negative result is also important to understand; we explain it in
section~\ref{sec:isotypic_RF_V}.

In Fig.~\ref{fig:lattice_symm}(b-d) we show the fundamental domain of
the lattice with defects that have symmetry $R$ in addition to $V$,
$F$ or $F_V$, correspondingly.

\subsection{Operators}
\label{sec:operators}

As our primary motivational example we use the two-dimensional
Schr\"o\-dinger operator
\begin{equation}\label{def-H}
H = -\Delta + q(\vec{x}),
\qquad
\vec{x}\in\R^2,
\end{equation}
with the real-valued
potential $q(\vec{x})$ assumed to be bounded and periodic with
respect to the lattice $\Gamma$.  For general properties of the
dispersion relation of such operators we refer the reader to
\cite{AvrSim_ap78,Kuchment_floquet,Kuc_bams16}.

To generate simple numerical examples we use discrete Schr\"odinger
operators with potentials crafted to break or retain some of
the symmetries listed above.  More precisely, denote by $G=(V,E)$ an infinite
graph embedded in $\R^2$, with vertex set $V$ and edge set $E$.  The
embedding is realized by the mapping $\loc : V \to \R^2$ which gives
the location in $\R^2$ of the given vertex.  A transformation $T :\R^2
\to \R^2$ preserves the graph structure if $u_1,u_2\in V$ implies
existence of $u_1', u_2' \in V$ such that $T\loc(u_j) = \loc(u_j')$
and $u_1', u_2'$ are connected by an edge if and only if $u_1, u_2$
are connected.

The graph is $\Gamma$-periodic if the graph structure is preserved by
the shifts defining the lattice.  A graph with space symmetry $S$ is
defined analogously.

The Schr\"odinger operator is defined on the functions from
$\ell^2(\C^V)$ by
\begin{equation}
  \label{eq:Schrod_discr}
  (H f)_v = \sum_{(v,u) \in E} m_{v,u} (f_v - f_u) + q_v f_v,
\end{equation}
where the sum is over all vertices $u$ adjacent to $v$, $m_{v,u}>0$
are weights associated to edges (often, they are taken inversely
proportional to edge length) and $q : V\to \R$ is the discrete
potential.  In our examples, the graph structure will be
compatible with all symmetries of the lattice $\Gamma$, while $m$ and
$q$ will be breaking some of the point symmetries (however, they will
always be
periodic).  The simplest $\Gamma$-periodic graph is shown in
Fig.~\ref{fig:graphene_tba}(a).  This is the graph arising as the
tight-binding approximation of graphene.

Note that the discrete Schr\"odinger operator on graphs with more than
two atoms per unit cell is not a mere mathematical curiosity since it
arises in studying the twisted graphene and graphene in a periodic
potential (superlattice); see \cite{Lui+_prl11,Yan+_np12,Wal+_prb13}
and references therein.

\begin{remark}
We will always assume that $H$ is
time-reversal (TR) invariant,
or, in other words,
satisfies
\[
CHC=H,
\]
with $C$ the complex conjugation from \eqref{def-C}.
In particular, in the case
\eqref{def-H},
this means that $q$ is real-valued,
while in the case
\eqref{eq:Schrod_discr}
this means that both $q$ and $m$ are real-valued.
\end{remark}

\subsection{Floquet-Bloch reduction}
\label{sec:floquet}

Floquet theory can be thought of as a version of Fourier expansion,
mapping the spectral problem on a non-compact manifold into a
continuous sum of spectral problems on a compact manifold.  The
compact spectral problems are parametrized by the representations of
the abelian group of periods (shifts).

Denote by $\spaceH(\vec{k})$, $\vec{k}=(k_1,k_2) \in
\mathbb{T}^2:=
[0,2\pi)^2$ the
space of Bloch functions, i.e.\ locally $L^2$ functions satisfying
\begin{equation}
  \label{eq:bloch_def}
  \psi(\vec{x} + n_1 \vec{a_1} + n_2 \vec{a_2}) = e^{i (n_1 k_1 + n_2 k_2)} \psi(x),
  \qquad n_1, n_2 \in \Z.
\end{equation}
For functions $\psi \in \spaceH(\vec{k})$ which also belong to the domain of
$H$ it can be immediately seen that
\begin{equation*}
  (H\psi)(x + n_1 \vec{a_1} + n_2 \vec{a_2}) = e^{i (n_1 k_1 + n_2 k_2)} H\psi(x),
\end{equation*}
i.e.\ the space $\spaceH(\vec{k})$ is invariant under $H$.  By
$H(\vec{k})$ we will denote the restriction of the operator $H$ to the
space $\spaceH(\vec{k})$.  Its domain is $\spaceH^2(\vec{k})$, the
dense subspace of $\spaceH(\vec{k})$ consisting of functions that
locally belong to $L^2$ together with their derivatives up to the
second order.

\begin{figure}[t]
  \centering
  \includegraphics[scale=0.9]{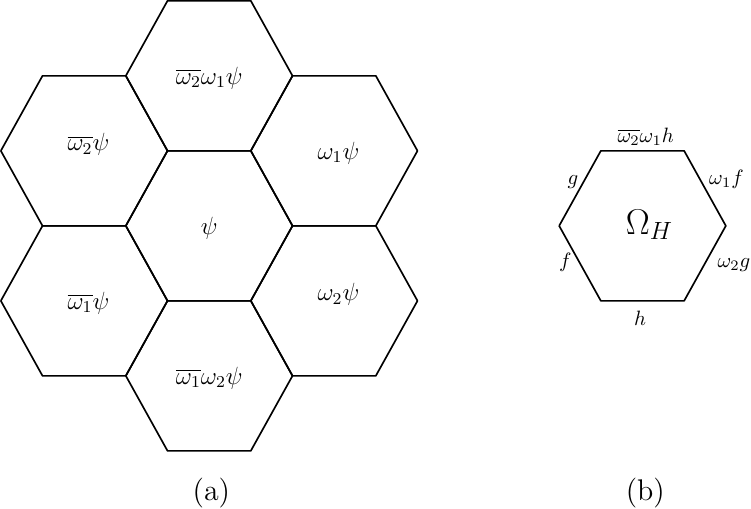}
  \caption{Floquet-Bloch reduction on the plane with hexagonal lattice
    generated by $\vec{a_1}$ and $\vec{a_2}$.}
  \label{fig:floquet_reduction}
\end{figure}

Choosing a fundamental domain\footnote{a domain having the property
  that each trajectory $\{\vec{x} + n_1 \vec{a_1} + n_2 \vec{a_2} : n_1, n_2 \in
  \Z\}$ has exactly one representative in it} of the action of the
group of periods, we can reduce the problem to the fundamental domain
with quasi-periodic boundary conditions.
The result of the Floquet-Bloch reduction is shown in
Fig.~\ref{fig:floquet_reduction}.  In Fig.~\ref{fig:lattice_symm}(a),
the lattice generating vectors $\vec{a_1}$ and $\vec{a_2}$ were shown
together with a convenient choice of the fundamental region (shaded)
and its four translations, by $\vec{a}_1$, $\vec{a}_2$, $\vec{a}_1 -
\vec{a}_2$ and $\vec{a}_1 + \vec{a}_2$.
We will denote this choice of the fundamental domain by $\Omega_H$.
The values of a
Bloch function in surrounding regions, according to
equation~\eqref{eq:bloch_def}, are indicated in
Fig.~\ref{fig:floquet_reduction}(a); we use the notation
\begin{equation}
  \label{eq:notatiok_k_omega}
  \omega_j = e^{i k_j},
\qquad
j=1,\,2.
\end{equation}
The continuity of the function and its derivative across
the boundaries of copies of the fundamental region impose boundary
conditions shown schematically in
Fig.~\ref{fig:floquet_reduction}(b).  They should be understood as
follows: taking the bottom and top boundaries as an example, and
parametrizing them left to right, the conditions read
\begin{equation*}
  \psi \big|_{\mathrm{top}} 
  = \overline{\omega_2}\omega_1 \psi \big|_{\mathrm{bottom}},
  \qquad
  -\partial_{\vec{n}} \psi \big|_{\mathrm{top}} 
  = \overline{\omega_2}\omega_1 \partial_{\vec{n}} 
  \psi \big|_{\mathrm{bottom}},
\end{equation*}
where the normal derivative is taken in the outward direction (this
causes the minus sign to appear).  We stress that in
Fig.~\ref{fig:floquet_reduction}(c) we use letters $f$, $g$ and $h$ as
placeholder labels, connecting the values of the function and its
derivative on similarly labeled sides.

\begin{figure}[t]
  \centering
  \includegraphics[scale=0.8]{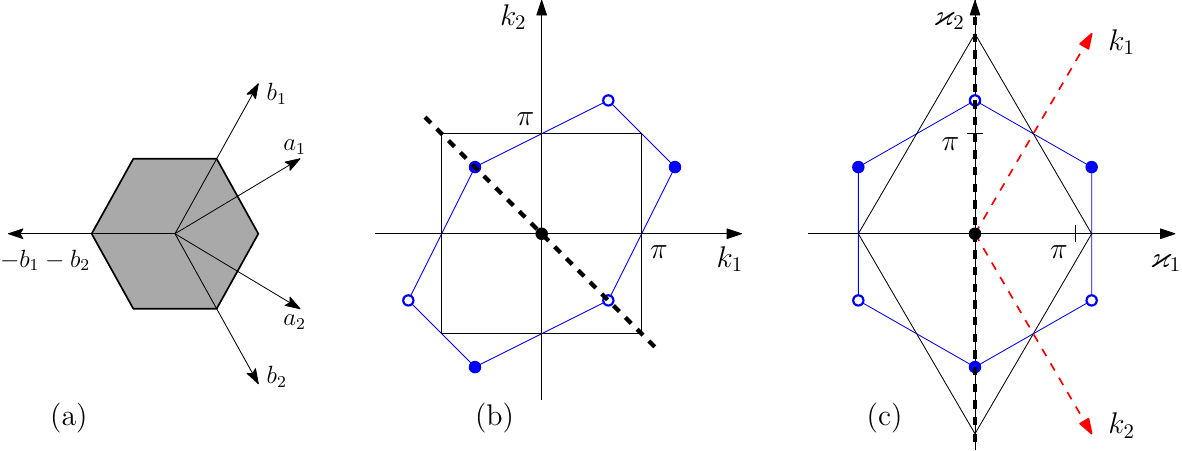}
  \caption{The dual basis (a) to the vectors $\vec{a_1}$ and
    $\vec{a_2}$ and two choices of the Brillouin zone in terms of (b)
    coordinates $k_1, k_2$ (drawn as if they were Cartesian) and (c)
    coordinates $\kappa_1, \kappa_2$ (which are Cartesian); part (c)
    also shows the correct position for the axes $k_1$ and $k_2$.  The
    axis of symmetry of the operator $\hat{F}$ is shown in dashed line
    (the equation $k_1=-k_2$).  Fixed points of the operator $\hat{R}$ are
    shown by circles (different fill styles correspond to different
    points of symmetry).}
  \label{fig:bri_zone}
\end{figure}

To represent the exponent of the Bloch phase $n_1k_1 + n_2k_2$ as a
scalar product, we introduce the vectors
\begin{equation}
  \label{eq:dual_lattice_vectors}
  \vec{b_1} = \left(\frac1{\sqrt{3}}, 1\right)^T, \qquad 
  \vec{b_2} = \left(\frac1{\sqrt{3}}, -1\right)^T,
\end{equation}
see Fig.~\ref{fig:bri_zone}(a).
Then
\begin{equation}
  \label{eq:dual_basis_def}
  \left.\vec{b}_i\right.^T\cdot \vec{a}_j = \delta_{i,j}.
\end{equation}
The vectors $\vec{b}_1$, $\vec{b}_2$ define a lattice which is known
as the \emph{dual lattice}.  For a hexagonal lattice, the dual lattice
is also hexagonal.  The lattice spanned by the vectors
$2\pi\vec{b}_1$, $2\pi\vec{b}_2$ will be denoted $\Gamma^*$.

Due to \eqref{eq:dual_basis_def},
one can write $n_1k_1 + n_2k_2$ as the dot product
\begin{equation*}
  n_1k_1 + n_2k_2 =
  \big(k_1\vec{b_1}+k_2\vec{b_2}\big)\cdot
  \big(n_1\vec{a_1}+n_2\vec{a_2}\big).
\end{equation*}

Let us comment on using coordinates $k_1, k_2$ which are the
coordinates with respect to the basis $\vec{b}_1, \vec{b}_2$ versus
the corresponding Cartesian coordinates $\kappa_1, \kappa_2$ given by
\begin{equation}
  \label{eq:kappa_def}
  \vec{\kappa} =
  \begin{pmatrix}
    \tfrac1{\sqrt{3}} & \tfrac1{\sqrt{3}}\\[3pt]
    1 & -1
  \end{pmatrix} \vec{k}
  =: B \vec{k} 
  \quad\mbox{or, equivalently,}\quad
  \vec{k} = B^{-1}\vec{\kappa} = 
  \begin{pmatrix}
    \tfrac{\sqrt{3}}2 & \tfrac12 \\[3pt]
    \tfrac{\sqrt{3}}2 & -\tfrac12
  \end{pmatrix} \vec{\kappa}.
\end{equation}
In Fig.~\ref{fig:bri_zone}(b) we show two
choices of the Brillouin zone\footnote{By ``Brillouin zone'' we
  understand \emph{any} choice of the fundamental domain of the dual
  lattice.  What is known as the ``first Brillouin zone'' is the
  hexagonal domain in blue in Fig.~\ref{fig:bri_zone}(c)} drawn in
terms of coordinates $k_1, k_2$ and coordinates $\kappa_1, \kappa_2$.
One arrives at the first picture if one uses $k_1$ and $k_2$ as parameters
for the dispersion relation (which is natural) ranging from $-\pi$ to
$\pi$ (black square) and then plots the result using $k_1$ and $k_2$
as Cartesian coordinates.  The resulting plot of the dispersion
relation will be skewed similarly to the blue hexagon in
Fig.~\ref{fig:bri_zone}(b) (cf. Figures 5 and 6 of
\cite{KucPos_cmp07}).  A more correct way of plotting is over a domain
in Fig.~\ref{fig:bri_zone}(c), as it will highlight the symmetries of
the result (see Figs.~\ref{fig:dispersion} and \ref{fig:contour} and
the explanations in the following section).

\section{Formulation of results}

For each value of the quasi-momentum $\vec{k}$, the operator $H(\vec{k})$
has discrete spectrum.  Its eigenvalues as functions of $\vec{k}$
form
what is known as the \emph{dispersion relation}.  Our results are
concerned with the structure of the dispersion relation for the
operators we described in Section~\ref{sec:operators}.  A typical
example is shown in Fig.~\ref{fig:dispersion}; it was computed for a
discrete Laplacian described in detail in
Example~\ref{exm:graph_structure}.

\begin{figure}[t]
  \centering
  \includegraphics[scale=0.6]{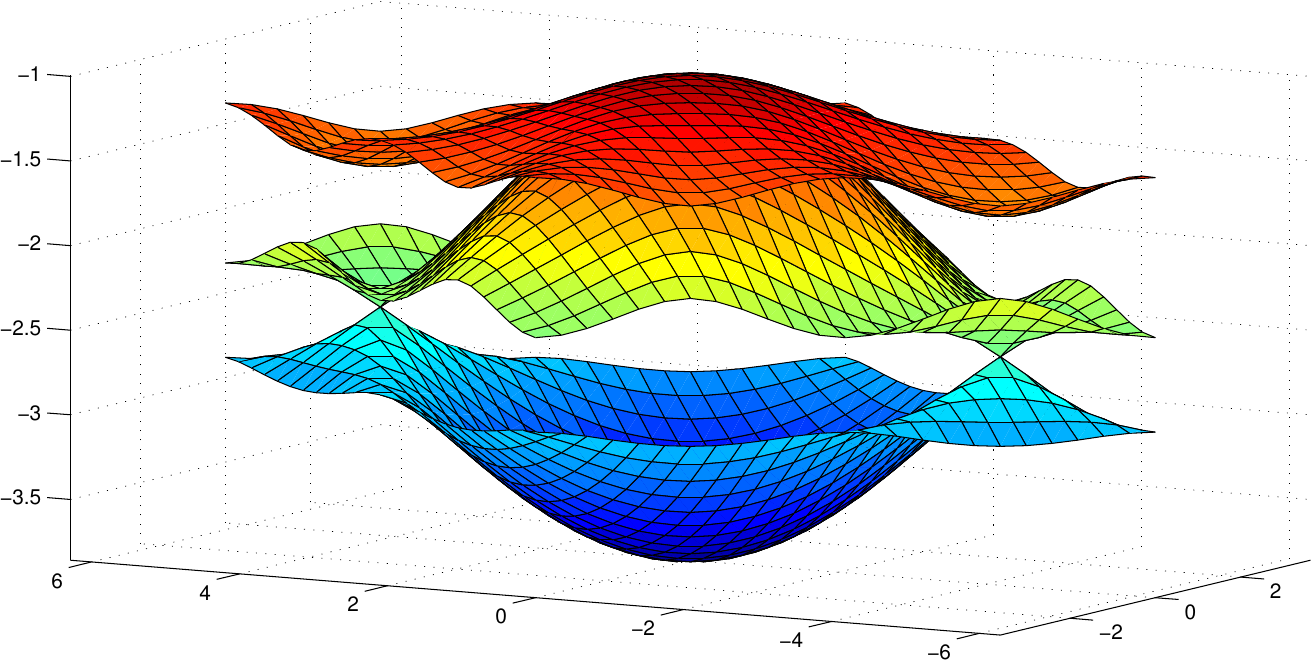}
  \caption{The lowest three bands of the dispersion relation of the
    graph from Example~\ref{exm:graph_structure}, which has reflection
    symmetry.  The lower two bands touch conically at the points
    $\pm\vec{k}^*$.  The Brillouin zone is parametrized by
    $\vec{\kappa}$ coordinates.}
  \label{fig:dispersion}
\end{figure}

In the figure, one can see two \emph{conical points} where the lowest
two sheets of the dispersion relation touch.  In terms of $\vec{k}$
coordinates, they touch at $\pm\vec{k}^*$, where
\begin{equation*}
  \vec{k}^* = \left(\frac{2\pi}{3}, -\frac{2\pi}{3}\right), 
  \quad\mbox{or, correspondingly,}\quad
  \vec{\kappa}^* = \left(0, \frac{4\pi}{3}\right).
\end{equation*}
The middle and the top sheets also touch, at the point $\vec{0} =
(0,0)$; at the point of contact both surfaces are locally flat.  We
will show that these features are typical: conical singularities at
the point $\vec{k}^*$ and flat contact at the point $\vec{0}$.

We start with formulating the following well-known result,
summarizing the effects the different symmetries of $H$ have
on the structure of the dispersion relation.

\begin{lemma}
  \label{lemma:sym_disp}
\begin{enumerate}
\item If the operator $H$ is $\Gamma$-periodic (i.e.\ invariant with
  respect to the shifts by the lattice $\Gamma$), then the dispersion
  relation $\lambda_n(\vec{\kappa})$ is $\Gamma^*$-periodic, i.e.\
  invariant with respect to the shifts
  \begin{equation}
    \label{eq:shift_action_dual_kappa}
    \vec{\kappa} \mapsto \vec{\kappa} + 2\pi b_1
    \qquad \mbox{and} \qquad
    \vec{\kappa} \mapsto \vec{\kappa} + 2\pi b_2.
  \end{equation}
\item
  If the operator $H$ is invariant with respect to
  complex conjugation $C$ or inversion $V$,
  then the dispersion relation $\lambda_n(\vec{\kappa})$ is
  invariant with respect to the inversion $\kappa\to-\kappa$.
\item
  If the operator $H$ is invariant
  with respect to horizontal reflection $F$,
  then the dispersion relation $\lambda_n(\vec{\kappa})$ is
  invariant with respect to the reflection
  $(\kappa_1,\kappa_2)\to(-\kappa_1,\kappa_2)$.
\item
  If the operator $H$ is invariant with respect to rotation $R$,
  then the dispersion relation $\lambda_n(\vec{\kappa})$ is
  invariant with respect to
  rotation by $2\pi/3$ around the point $\vec{0} = (0,0)$.  
  \begin{enumerate}
  \item If, in addition to symmetry $R$,
    the operator $H$ is $\Gamma$-periodic, then the dispersion
    relation is also invariant with respect to rotation by $2\pi/3$
    around the points $\pm \vec{\kappa}^* := \pm(0,4\pi/3)$.
  \item If, in addition to symmetry $R$,
    the operator $H$ has symmetry $V$ or $C$, the
    dispersion relation is invariant with respect to rotation by
    $\pi/3$ around the point $\vec{0} = (0,0)$.
  \end{enumerate}
\end{enumerate}
\end{lemma}

For completeness, we provide the proof in
Section~\ref{sec:reduced_symm}.

\begin{remark}
  When $H$ is invariant with respect to complex conjugation, inversion
  symmetry of the operator does not result in any additional
  symmetries of the dispersion relation.
\end{remark}

\begin{example}
  Figure~\ref{fig:dispersion} was produced for a $\Gamma$-periodic
  graph operator which has symmetries $R$, $C$ and $F$ (but not $V$).
  Its dispersion relation therefore has symmetry groups $D_6$ around
  the point $\vec{0}$, and $D_3$ around the points
  $\pm \vec{\kappa}^*$ ($D_3$ and $D_6$ are the groups of symmetries
  of equilateral triangle and hexagon).  This can be seen clearly if
  we plot the level curves of the dispersion surfaces,
  Fig.~\ref{fig:contour}.
\end{example}

\begin{figure}[t]
  \centering
  \includegraphics[scale=0.5]{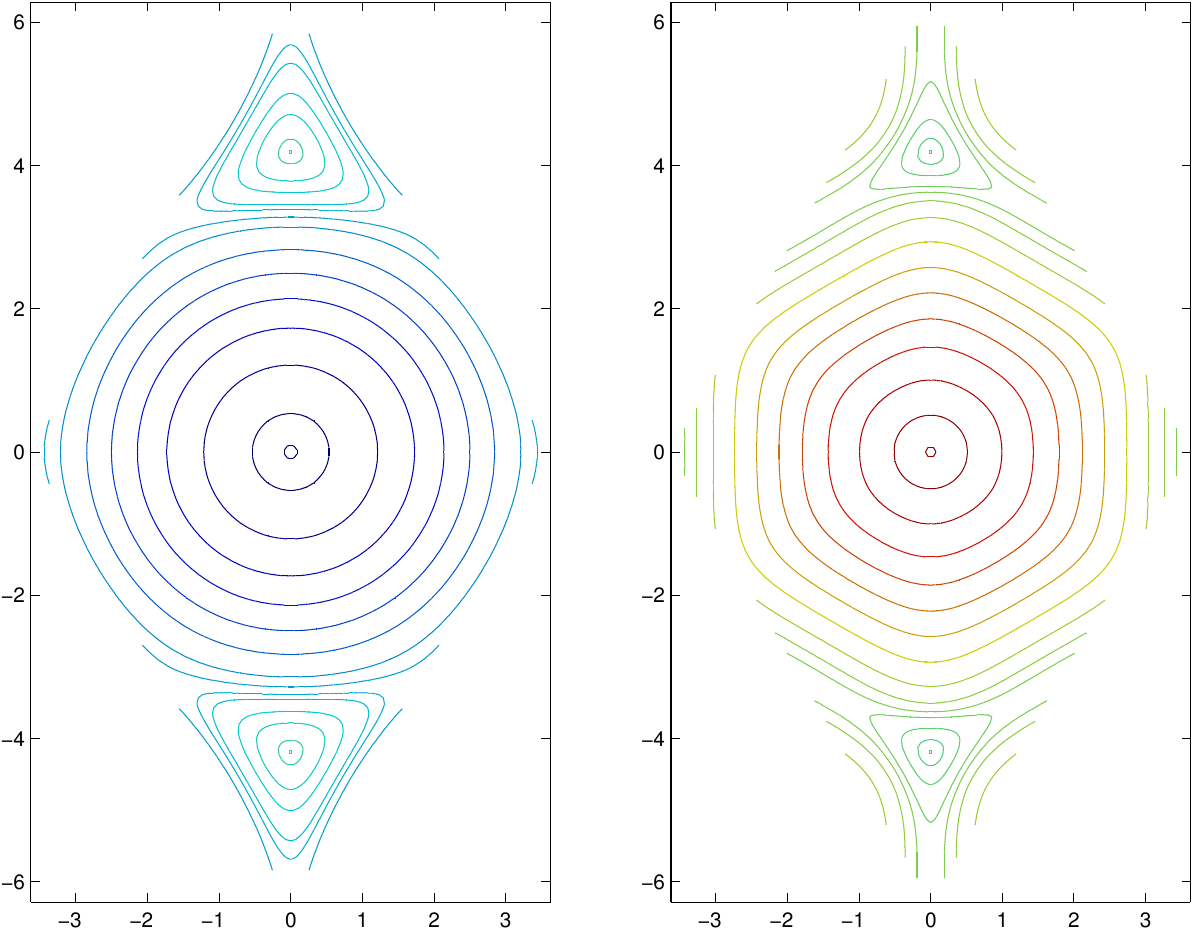}
  \caption{The contour plots of the two lowest bands from
    Fig.~\ref{fig:dispersion}.  Because of the symmetry of the
    operator (rotation, complex conjugation and reflection), the
    contours have the symmetry of an equilateral hexagon around
    $(0,0)$ and equilateral triangle around the
    points $\pm\vec{\kappa}^*$.}
  \label{fig:contour}
\end{figure}

\begin{theorem}
  \label{thm:main}
  Let the self-adjoint $\Gamma$-periodic operator $H$ be invariant
  under rotation $R$.  Let $\vec{k}_0$ be one of the points
  $\vec{k}^*$, $-\vec{k}^*$ or $\vec{0}$.  The space
  $\spaceH(\vec{k}_0)$ splits into the orthogonal sum
  \begin{equation}
    \label{eq:split_spaceH_star1}
    \spaceH(\vec{k}_0) = \spaceH_0(\vec{k}_0) \oplus
    \spaceH_\perp(\vec{k}_0),
  \end{equation}
  where
  $\spaceH_0(\vec{k}_0) = \{\psi \in \spaceH(\vec{k}_0) : R\psi =
  \psi\}$. This splitting is $H$-invariant.  Additionally,
  \begin{enumerate}
  \item \label{itm:main_kstar} 
    if $H$ is also invariant with respect to at least one of the
    following: reflection $F$ or the conjugated inversion $\Vbar$,
    then all eigenvalues of the operator $H$ restricted to
    $\spaceH_\perp(\pm\vec{k}^*)$ have even multiplicity.
Hence,
if $\spaceH_\perp(\pm\vec{k}^*)\ne 0$,
    $H(\pm\vec{k}^*)$ has some eigenvalues with multiplicity at least
    2.  If, moreover, the multiplicity of an eigenvalue $\lambda_0 \in
    \sigma(H(\pm\vec{k}^*))$ is exactly 2, the dispersion relation in
    coordinates $\vec\kappa$ is, to the leading order, a
    circular cone:
    \begin{equation}
      \label{eq:conical}
      (\lambda - \lambda_0)^2 = |\alpha|^2 |\vec{\kappa} - \vec{\kappa}_0|^2 +
      O\left(|\vec{\kappa} - \vec{\kappa}_0|^3\right),
      \qquad
      \alpha\in\C.
    \end{equation}
  \item If $H$ is also invariant under the complex conjugation $C$, then
    all eigenvalues of the operator $H$ restricted to
    $\spaceH_\perp(\pm\vec{0})$ have even multiplicity.
    Hence, if $\spaceH_\perp(\pm\vec{0})\ne 0$,
   $H(\pm\vec{0})$ has some eigenvalues with multiplicity at least
    2.  If, moreover, the multiplicity of an eigenvalue $\lambda_0 \in
    \sigma(H(\pm\vec{0}))$ is exactly 2, then the dispersion relation
    at this point is flat:
    \begin{equation}
      \label{eq:flat}
      (\lambda - \lambda_0)^2 = 
      O\left(|\vec{\kappa} - \vec{\kappa}_0|^3\right).
    \end{equation}
  \end{enumerate}
\end{theorem}

Theorem~\ref{thm:main} will follow from Lemma~\ref{lem:symm_disp},
Lemma~\ref{lemma:mult} (for the points $\vec{k}_0=\pm\vec{k}\sp\ast$)
and Lemma~\ref{lemma:point0} (for the point $\vec{k}_0=\vec{0}$).  In addition, 
in Lemma~\ref{lemma:structure_of_Hderiv} we will give a convenient expression 
for $\alpha$ of (\ref{eq:conical}).
We will also discuss a further splitting of the spaces $\spaceH_\perp(\vec{k}_0)$
and will give an explicit description of the restriction of
$H(\vec{k}_0)$ to the constituent subspaces.

By Theorem~\ref{thm:main}, we are guaranteed to have conical points
(i.e.\ points where the dispersion relation is of the form
(\ref{eq:conical})) whenever two conditions are satisfied: an
eigenvalue of $H$ on $\spaceH_\perp(\pm\vec{k}^*)$ has minimal
multiplicity (two) and is not in the spectrum of $H$ on
$\spaceH_0(\pm\vec{k}^*)$, and the parameter $\alpha \neq 0$.
Intuitively, it is clear that both conditions are ``generic'': if
either of them is broken, any typical small perturbation of the
potential should restore it.

To make this intuition precise, we consider the operator $H = -\Delta
+ \epsilon q(\vec{x})$, where we are able to say more about the
parameter $\alpha$ and the exact multiplicity of eigenvalues.

\begin{theorem}
  \label{thm:Laplacian}
  Let $H = -\Delta + \epsilon q(\vec{x})$ with bounded measurable real
  potential $q(\vec{x})$ which is invariant under the shifts by
  lattice $\Gamma$, rotation $R$, and at least one of the following:
  reflection $F$ or inversion $V$.
  Further, assume that the condition
  \begin{equation}
    \label{eq:cond_blah}
    \int_{\Omega_H} e^{\frac{4\pi i}{\sqrt3} x_1}
    q(\vec{x})\, d\vec{x} \neq 0
  \end{equation}
  is satisfied.
  Then the following conditions
hold for all $\epsilon\in\R$
except possibly on a discrete set:
  \begin{enumerate}
  \item there is an eigenvalue $\lambda_0(\epsilon)$ of $H$ on
    $\spaceH_\perp(\pm\vec{k}^*)$ of multiplicity exactly two and it is
    the smallest eigenvalue of $H$ on $\spaceH_\perp(\pm\vec{k}^*)$
    for small $\epsilon$,
  \item the eigenvalue $\lambda_0(\epsilon)$ is not an eigenvalue of
    $H$ on $\spaceH_0(\pm\vec{k}^*)$,
  \item the corresponding value of $\alpha$
    in equation (\ref{eq:conical}) is non-zero.
  \end{enumerate}
\end{theorem}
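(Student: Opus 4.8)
\textbf{Proof proposal for Theorem~\ref{thm:Laplacian}.}

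The plan is to exploit the fact that for $\epsilon = 0$ everything is explicitly computable via plane waves, and then argue that each of the three conditions, being an algebraic/analytic non-degeneracy statement in $\epsilon$, can fail only on a discrete set unless it fails identically. First I would compute the relevant spectral data at $\epsilon = 0$: the operators $Q_0, Q_1, Q_2$ are restrictions of $-\Delta$ to the fundamental subdomain with the special boundary conditions defining the splitting in Theorem~\ref{thm:main}, and their eigenfunctions are (combinations of) exponentials $e^{i\vec{\xi}\cdot\vec{x}}$ with $\vec{\xi}$ in an appropriate coset of the dual lattice determined by $\vec{k}_0 = \vec{k}^*$. One identifies the smallest such $|\vec{\xi}|$, checks that among the three shortest dual-lattice-shifted momenta exactly two land in the isotypic component giving $Q_1$ (and one in $Q_0$), so that at $\epsilon = 0$ the eigenvalue $\lambda_0(0)$ of $Q_1$ is simple, is the ground state of $Q_1$, and differs from the bottom of $\sigma(Q_0)$. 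This establishes conditions (1) and (2) at $\epsilon = 0$.

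Next, for condition (3), I would use Theorem~\ref{thm:structure_of_Hderiv} (referenced in the introduction) together with first-order perturbation theory: $\alpha$ is, up to normalization, an off-diagonal matrix element of the perturbation $q$ between the two degenerate eigenfunctions of $H(\vec{k}^*)$ at $\epsilon = 0$. Writing those eigenfunctions as the two shortest exponentials $e^{i\vec{\xi}_1\cdot\vec{x}}$, $e^{i\vec{\xi}_2\cdot\vec{x}}$ with $\vec{\xi}_1 - \vec{\xi}_2$ equal to a specific dual-lattice vector, the matrix element is exactly the Fourier coefficient of $q$ at that vector. A direct computation identifies $\vec{\xi}_1 - \vec{\xi}_2$ with the vector whose first Cartesian component is $4\pi/\sqrt{3}$ (matching the exponent in \eqref{eq:cond_blah}), so condition~\eqref{eq:cond_blah} is precisely the statement that $\alpha \neq 0$ at $\epsilon = 0$, hence for small $\epsilon$ by continuity.

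Finally I would upgrade "small $\epsilon$" to "all but discretely many $\epsilon$". For (1) and (2): the eigenvalues of $Q_j(\epsilon)$ are real-analytic in $\epsilon$ (Kato's analytic perturbation theory for the analytic family $-\Delta + \epsilon q$ with $q$ bounded), and two real-analytic functions on a connected domain either coincide identically or agree only on a discrete set; since $\lambda_0(\epsilon)$ is simple and off $\sigma(Q_0)$ at $\epsilon = 0$, the coincidence of $\lambda_0$ with another branch of $Q_1$, or with a branch of $Q_0$, is a discrete event. (One must also rule out that the relevant branch is globally constant equal to another branch, which is excluded by the $\epsilon = 0$ computation.) For (3): once $\lambda_0(\epsilon)$ is simple, $\alpha(\epsilon)$ is a real-analytic function of $\epsilon$ (it is built from the simple eigenprojection and the derivative of $H(\vec{k})$, both analytic), and it is nonzero at $\epsilon = 0$, so it vanishes on at most a discrete set. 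The main obstacle I anticipate is the bookkeeping in the first step: correctly pinning down which coset of the dual lattice and which isotypic component each short exponential belongs to, and verifying that the difference vector of the two degenerate modes is exactly $(4\pi/\sqrt{3}, 0)$ in Cartesian coordinates, so that \eqref{eq:cond_blah} is the right condition; once that identification is made, the analyticity arguments are routine.
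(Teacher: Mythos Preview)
Your plan has a genuine error in how condition~\eqref{eq:cond_blah} enters: you have attached it to the wrong item of the theorem.

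First, your claim that at $\epsilon=0$ the lowest eigenvalue of $Q_1$ ``differs from the bottom of $\sigma(Q_0)$'' is false. The lowest eigenvalue of $H_0(\vec{k}^*)$ is \emph{triply} degenerate: the three plane waves $\phi$, $R\phi$, $R^2\phi$ (with $\phi(\vec{x})=e^{i\vec{\kappa}^*\cdot\vec{x}}$) split one each into $Q_0$, $Q_1$, $Q_2$, so condition~(2) \emph{fails} at $\epsilon=0$. Condition~\eqref{eq:cond_blah} is exactly the first-order perturbation condition guaranteeing that the $Q_0$ and $Q_1$ ground states separate for small $\epsilon\neq0$; once you know the two analytic branches are not identically equal, their coincidence set is discrete. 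That is how \eqref{eq:cond_blah} is actually used.

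Second, your description of $\alpha$ is wrong. By \eqref{eq:alpha_formula}, $\alpha=\langle f_1,\partial_{k_1}H\, f_2\rangle$, and $\partial_{k_1}H$ has nothing to do with $q$: after the standard gauge transformation it is $-2i\,\partial_{x_1}$. At $\epsilon=0$ one computes $\alpha(0)=-4\pi i$ directly from the plane-wave eigenfunctions, independently of $q$. So \eqref{eq:cond_blah} is \emph{not} the condition $\alpha(0)\neq 0$; that holds automatically. The analyticity argument you give for $\alpha(\epsilon)$ is then correct once based on the right nonvanishing at $\epsilon=0$.

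In short: swap the roles you assigned --- \eqref{eq:cond_blah} handles item~(2) (via first-order separation of a triple degeneracy), while item~(3) follows from the explicit value $\alpha(0)=-4\pi i$ plus analyticity.
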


Theorem~\ref{thm:Laplacian} will be proved in Section~\ref{sec:pert_pure}.
We mention that condition (\ref{eq:cond_blah}) above is equivalent to
condition (5.2) of \cite{FefWei_jams12} when one takes into account
symmetries (such as (2.36) of \cite{FefWei_jams12}).  

We now consider the fate of a conical point when the rotational
symmetry is broken by a small perturbation.
The following theorem is proved in Section~\ref{sec:persistence}.

\begin{theorem}
  \label{thm:persistence}
  Let $H$ be an operator satisfying the conditions of
  Theorem~\ref{thm:main}, part~\ref{itm:main_kstar}.  
  Assume that its dispersion relation has a
  nondegenerate conical point at the point $\vec{k}_0=\pm\vec{k}^*$.
  Consider the perturbed operator $H_\epsilon := H + \epsilon W$,
  where the relatively bounded perturbation $W$ has the same
  symmetries as $H$ (namely, $\Gamma$-invariance and either $\Vbar$- or
  $F$-invariance) \emph{except} the $R$-invariance.

  Then, for small $\epsilon$, the dispersion relation of $H_\epsilon$
  has a nondegenerate conical point in the neighborhood of
  $\vec{k}_0$.  Furthermore, if $H_\epsilon$ is invariant with respect
  to reflection $F$, the conical point remains on the line $k_2 =
  -k_1$ modulo $2\pi$.
\end{theorem}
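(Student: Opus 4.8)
The plan is to reduce everything to a $2\times2$ matrix family obtained by projecting $H_\epsilon(\vec k)$ onto the two-dimensional eigenspace of the unperturbed conical point, and then to track the zero of this matrix family as $\epsilon$ is turned on, using the residual symmetry ($\Vbar$ or $F$) to control its structure. More precisely: let $\lambda_0 = \lambda_0(\vec k_0)$ be the doubly-degenerate eigenvalue at $\vec k_0 = \pm\vec k^*$, with spectral projection $P_0$ of rank $2$. For $\vec k$ near $\vec k_0$ and $\epsilon$ near $0$, standard analytic perturbation theory (Riesz projections; $W$ relatively bounded, $H$ with compact resolvent) gives a rank-$2$ spectral projection $P(\vec k,\epsilon)$ depending analytically on both parameters, and the two relevant bands are the eigenvalues of the $2\times2$ matrix
\begin{equation*}
  M(\vec k, \epsilon) = P_0\,\big(H_\epsilon(\vec k) - \lambda_0\big)\big|_{\range P_0}
\end{equation*}
after the usual unitary identification of $\range P(\vec k,\epsilon)$ with $\range P_0$. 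Writing $M = c_0 I + \vec c\cdot\vec\sigma$ in a basis where the Pauli-type decomposition is convenient, the band functions are $c_0 \pm |\vec c|$, so a conical point corresponds to a \emph{simultaneous} zero of the (real, two- or three-component) vector $\vec c(\vec k,\epsilon)$, and nondegeneracy of the cone at $\epsilon=0$ is exactly the statement that the differential $D_{\vec k}\vec c$ at $(\vec k_0,0)$ is invertible (this is where Theorem~\ref{thm:main}(3), the hypothesis $|\alpha|\neq 0$, enters).

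The key structural input is that the residual symmetry forces $\vec c$ to have only \emph{two} independent real components rather than three, so that $\vec c(\vec k,\epsilon) = 0$ is a system of two equations in two unknowns $\vec k$, with a parameter $\epsilon$, and the implicit function theorem applies. In the inversion case one uses $\Vbar$: as developed in Appendix~\ref{sec:representations}, $\Vbar$ acts on $\spaceH(\vec k_0)$ as an antiunitary involution commuting with $H_\epsilon(\vec k_0)$; choosing the basis of $\range P_0$ so that $\Vbar$ acts as complex conjugation forces $M(\vec k_0,\epsilon)$ to be a real symmetric $2\times2$ matrix, i.e. the $\sigma_2$ (imaginary) component of $\vec c$ vanishes identically at $\vec k = \vec k_0$. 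One then shows (differentiating the $\Vbar$-intertwining relation in $\vec k$, exactly as in Theorem~\ref{thm:structure_of_Hderiv}) that to first order in $\vec k - \vec k_0$ the $\sigma_2$-component still vanishes, so on the relevant scale $\vec c$ is genuinely two-component; equivalently, this is the Berry-phase argument — the real symmetric structure means the Berry phase around the degeneracy is a multiple of $\pi$, hence the degeneracy cannot be lifted, only moved. In the reflection case one uses $F$: at $\vec k_0 = \pm\vec k^*$ the point is fixed by $\hat F$ composed with the appropriate shift (since $k_1 = -k_2$ at $\vec k^*$), $\hat F$ commutes with $H_\epsilon(\vec k_0)$, and — this is the new phenomenon flagged in the introduction — $\hat F$ acts on $\range P_0$ with eigenvalues $+1$ and $-1$, i.e. the two eigenfunctions have opposite parity. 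In the parity-adapted basis $M$ is then forced to be off-diagonal at points on the mirror line, which again kills one real component of $\vec c$; moreover the mirror line $k_2 = -k_1$ is mapped to itself by $\hat F$, so the whole analysis can be carried out with $\vec k$ constrained to (a neighborhood in) that line plus one transverse direction, and the implicit function theorem then produces a zero of $\vec c$ \emph{on} the line $k_2 = -k_1 \pmod{2\pi}$, giving the last assertion.

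With the structural reduction in hand the endgame is routine: define $\Phi(\vec k,\epsilon) = \vec c(\vec k,\epsilon) \in \R^2$, analytic, with $\Phi(\vec k_0, 0) = 0$ and $D_{\vec k}\Phi(\vec k_0,0)$ invertible by nondegeneracy; the implicit function theorem yields a unique analytic curve $\vec k(\epsilon)$ near $\vec k_0$ with $\Phi(\vec k(\epsilon),\epsilon) = 0$, and invertibility of the differential persists for small $\epsilon$ by continuity, so the conical point at $\vec k(\epsilon)$ is again nondegenerate. In the $F$-invariant subcase one repeats the argument on the one-dimensional mirror line: there $\vec c$ restricted to the line has a single nonvanishing component whose $\vec k$-derivative along the line is nonzero, so its zero persists and stays on the line. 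The main obstacle — the only place real work is needed — is establishing that the residual symmetry actually eliminates one real component of $\vec c$ \emph{not merely at the point $\vec k_0$ but to the relevant order in $\vec k - \vec k_0$}, so that the cone is a zero of a two-component (not three-component) field. This is precisely the content one extracts from the representation-theoretic normal form of $H_\epsilon$ and its first derivatives at the fixed point, analogous to Theorem~\ref{thm:structure_of_Hderiv}; once that normal form is written down, everything else is the implicit function theorem.
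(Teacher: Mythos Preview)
Your approach is essentially correct and, for the $F$-symmetric case, matches the paper's parity-exchange argument (Section~\ref{sec:keepingF}): on the mirror line $k_2=-k_1$ the operator $H_\epsilon(\vec k)$ commutes with $F$, the two eigenfunctions at the cone have opposite $F$-parity, and the two analytic eigenvalue curves (even and odd) cross transversally, hence robustly.  One slip: in the parity-adapted basis $M$ is forced to be \emph{diagonal} on the mirror line, not off-diagonal, since $F$-even and $F$-odd subspaces are invariant under $H_\epsilon(\vec k)$ there; your final paragraph gets this right (``single nonvanishing component'') even though the second paragraph does not.

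For the $\Vbar$-symmetric case your route is genuinely different from the paper's.  The paper argues topologically via the Berry phase (Section~\ref{sec:keepingV}): the antiunitary involution $\Vbar$ quantizes the phase to $0$ or $\pi$, at $\epsilon=0$ it equals $\pi$ around the cone (Lemma~\ref{lem:minus_one}), and continuity in $\epsilon$ forces a degeneracy to persist inside any small contour.  You instead argue analytically: $\Vbar$ makes the reduced matrix $M$ real symmetric in a suitable basis, so $\vec c$ has only two real components, and the implicit function theorem tracks the zero.  Both work; the paper's argument is more conceptual and ties into the physics literature on geometric phases, while yours is more constructive (it produces the curve $\vec k(\epsilon)$ explicitly) and delivers nondegeneracy of the perturbed cone more directly.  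Two remarks on your version.  First, the worry you flag as the ``main obstacle'' --- that the $\sigma_2$-component vanishes only to first order in $\vec k-\vec k_0$ --- is misplaced: $\Vbar$ commutes with $H_\epsilon(\vec k)$ for \emph{all} $\vec k$ (Section~\ref{sec:reduced_symm}), so once the identification unitary $U(\vec k,\epsilon)$ is chosen $\Vbar$-compatibly (e.g.\ the Sz.-Nagy formula inherits any symmetry of the Riesz projections), $M(\vec k,\epsilon)$ is real symmetric identically and $c_2\equiv 0$.  Second, you should make explicit that the nondegeneracy hypothesis $|\alpha|\neq 0$ from Theorem~\ref{thm:main}(3) translates precisely into invertibility of $D_{\vec k}\vec c$ at $(\vec k_0,0)$ as a map $\R^2\to\R^2$, which is what licenses the implicit function theorem.
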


We remark that a complementary result in the case when $H$ is the pure
Laplacian ($H=-\Delta$) and $W$ is a $\Gamma$ and $\Vbar$-invariant
(but \emph{not necessarily} $R$-invariant) potential satisfying a Fourier
condition akin to \eqref{eq:cond_blah} was obtained by Colin de
Verdi\`ere in \cite{CdV_msmf91}.  This highlights the fact that conical
singularities are very typical in 2-dimensional problems.

\section{Symmetries in the dual space; proof of Lemma~\ref{lemma:sym_disp}}
\label{sec:reduced_symm}

We recall that
the operator $H(\vec{k})$ is the restriction of
the operator $H$ to the space $\spaceH(\vec{k})$.  Equivalently, it
can be considered as an operator on the compact domain of
Fig.~\ref{fig:floquet_reduction}(c) with the specified boundary
conditions\footnote{if the operator $H$ is specified on discrete
  graphs, the ``boundary conditions'' require special interpretation,
  see Section~\ref{sec:graph_examples} for some examples}.  It is
immediate from the definition of $H(\vec{k})$ that the dispersion
relation is invariant with respect to shifts by $2\pi$,
\begin{equation}
  \label{eq:shift_action_dual}
  \vec{k} \mapsto \vec{k} + (2\pi,0)
  \qquad \mbox{and} \qquad
  \vec{k} \mapsto \vec{k} + (0,2\pi).
\end{equation}
In other words, the dispersion relation is periodic with respect to
the lattice $\Gamma^*$.  We will now study other symmetries of the
dispersion relation.

For given values of $k_1, k_2$ (or, equivalently, $\omega_1,
\omega_2$, where $\omega_j = e^{ik_j}$), the operator $H(\vec{k})$ may
no longer have all the symmetries of the original operator $H$: while
the differential expression defining the operator is still invariant,
the domain of definition has been restricted and may not be invariant
anymore.

We start with the rotation operator $R$.  We first need to understand
the effect of $R$ on the space $\spaceH(\vec{k})$.  This can be
understood by rotating the picture in
Fig.~\ref{fig:floquet_reduction}(b) by $2\pi/3$ and finding the ``new
$\omega_1$, $\omega_2$'':
\begin{equation*}
  \omega_1' = \overline{\omega_1}\omega_2, \qquad
  \omega_2' = \overline{\omega_1}, \qquad
  \overline{\omega_2'}\omega_1' = \omega_2.
\end{equation*}
The last equation clearly follows from the first two.  For the
exponents $k'_1$, $k'_2$, defined as in \eqref{eq:notatiok_k_omega},
we have
\begin{equation}
  \label{eq:hatR_def}
  \begin{pmatrix}
    k'_1\\ k'_2
  \end{pmatrix}
  =
  \begin{pmatrix}
    -1 & 1\\
    -1 & 0 
  \end{pmatrix}
  \begin{pmatrix}
    k_1\\k_2
  \end{pmatrix}
  =: \hat{R}
  \begin{pmatrix}
    k_1\\k_2
  \end{pmatrix}.
\end{equation}
With respect to the dual basis $\vec{b_1}, \vec{b_2}$, the matrix
$\hat{R}$ is unitary: in terms of coordinates
$\vec{\kappa} = k_1 \vec{b_1} + k_2\vec{b_2} =: B\vec{k}$
the action of $R$ is given by
\begin{equation*}
  B \hat{R} B^{-1} = M_R^* = 
  \begin{pmatrix}
    -1/2 & -\sqrt{3}/2 \\
    \sqrt{3}/2 & -1/2
  \end{pmatrix}.
\end{equation*}
Therefore, the action of $\hat{R}$
is the rotation of coordinates by $2\pi/3$,
see Fig.~\ref{fig:bri_zone}(a), and $R$ acts
as a unitary operator from $\spaceH(\vec{k})$ to
$\spaceH(\hat{R}\vec{k})$.

More formally, denote by $S_{\vec{n}}$ the operator of the shift
$\psi(\vec x) \mapsto \psi(\vec x + n_1 \vec{a}_1 + n_2 \vec{a}_2) =:
\psi(\vec{x} + A\vec{n})$, with
\begin{equation}
  \label{eq:A_def}
  A := (\vec{a}_1, \vec{a}_2) = 
  \begin{pmatrix}
    \sqrt{3}/{2} & \sqrt{3}/{2} \\
    {1}/{2} & -{1}/{2}
  \end{pmatrix}.
\end{equation}
Then, for a function $\psi$ satisfying
\begin{equation*}
  \psi(\vec{x} + A \vec{n}) = e^{i\vec{k} \cdot \vec{n}} \psi(\vec{x}),
\end{equation*}
we have
\begin{align*}
  S_{\vec{n}} R \psi &= \psi\left(M_R(\vec{x} + A\vec{n})\right) 
  = \psi\left(M_R\vec{x} + A(A^{-1}M_RA)\vec{n}\right) \\
  &= e^{i \vec{k}\cdot(A^{-1}M_RA)\vec{n}} \psi(M_R\vec{x})
  = e^{i \left((A^{-1}M_RA)^*\vec{k}\right) \cdot \vec{n}} R\psi,
\end{align*}
and therefore $R$ maps functions from $\spaceH(\vec{k})$ to
$\spaceH(\hat{R}\vec{k})$ with $\hat{R} = (A^{-1}M_RA)^*$.

Since the operator $H(\vec{k})$ is the restriction of the operator
$H$, which is invariant under the rotation $R$, to the space
$\spaceH(\vec{k})$, we get 
\begin{equation}
  \label{eq:rot_operator}
  H(\vec{k}) =  R^* H( \hat{R} \vec{k} ) R,
\end{equation}
i.e.\ $H( \hat{R} \vec{k} )$ is unitarily equivalent to $H(\vec{k})$.
As a consequence, the dispersion
relation $\lambda_n(\vec{k})$ is invariant under the mapping
\begin{equation}
  \label{eq:rot_mod_map}
  \vec{k} \mapsto \hat{R} \vec{k} \mod 2\pi\mathbb{Z}^2,
\end{equation}
which maps a Brillouin zone to itself (here we assumed that $H$ is
$\Gamma$-periodic).  The fixed points of this mapping are the points
\begin{equation}
  \label{eq:fixedpoints}
  \vec{k}^* := (2\pi/3, -2\pi/3), 
  \qquad
  -\vec{k}^*:=(-2\pi/3, 2\pi/3),
  \qquad     
  \vec{0} := (0,0),
\end{equation}
and their shifts by $2\pi$.  In coordinates $\kappa$, the fixed points are
\begin{equation}
  \label{eq:fixedpoints_kappa}
  \vec{\kappa}^* := (0, 4\pi/3), 
  \qquad
  -\vec{\kappa}^*:=(0, -4\pi/3),
  \qquad 
  \vec{0} := (0,0).
\end{equation}

Analogous considerations for the horizontal reflection $F$ result in
\begin{equation*}
  \omega_1' = \cc{\omega_2}, \quad
  \omega_2' = \cc{\omega_1}, 
\end{equation*}
and, eventually, in
\begin{equation}
  \label{eq:F_action}
  F H(\vec{k}) F^* = H( \hat{F} \vec{k} ),
  \quad\mbox{where}\quad
  \hat{F} = 
  \begin{pmatrix}
    0 & -1\\
    -1 & 0
  \end{pmatrix}.
\end{equation}

The matrix $\hat{F}$ is a reflection with respect to the line
$k_2=-k_1$ and it leaves the points of this line invariant.  In
$\vec\kappa$ coordinates the mapping $\hat{F}$ acts as $(\kappa_1,
\kappa_2) \mapsto (-\kappa_1, \kappa_2)$.

Both complex conjugation and inversion result in
\begin{equation*}
  \omega_1' = \overline{\omega_1}, \qquad \omega_2' =
  \overline{\omega_2},
\end{equation*}
and possess a unique fixed point $\vec{k} = \vec{0}$.  However, their
composition $\Vbar$ preserves the space $\spaceH(\vec{k})$ for all values
of $\vec{k}$.  
To be more precise,
using the antiunitary operation of
taking complex conjugation $C$,
we have
\begin{equation}
  \label{eq:TV_action}
  C H(\vec{k}) C^{-1} = H( -\vec{k} ) = V H(\vec{k}) V^*.
\end{equation}

Equations (\ref{eq:rot_operator}), (\ref{eq:F_action}) and
(\ref{eq:TV_action}) show that the symmetries of the
operator result in the symmetries of the dispersion relation.
These symmetries have been summarized in
Lemma~\ref{lemma:sym_disp} above.

An important consequence of symmetry is a restriction on the possible
local form of the dispersion relation.  In particular,
the dispersion relation must be a circular cone
(which could be degenerate)
around a symmetry point of multiplicity two.

\begin{lemma}
  \label{lem:symm_disp}
  Let $\vec{\kappa}_0$ be one of the symmetry points, $\vec{0}$ or
  $\pm\vec{\kappa}^*$.  
  \begin{enumerate}
  \item \label{item:symm_disp_simple}
    If $\lambda_n(\vec{\kappa}_0) =: \lambda_0$ is a simple eigenvalue,
    the dispersion relation is given locally by
    \begin{equation}
      \label{eq:loc_disp_form_single}
      \lambda - \lambda_0 = a |\vec{\kappa} - \vec{\kappa}_0|^2 +
      O(\abs{\vec{\kappa}-\vec{\kappa}\sb 0}^3), \qquad a \in \R.
    \end{equation}
  \item  \label{item:symm_disp_double}
    If $\lambda_n(\vec{\kappa}_0) =: \lambda_0$ is a double eigenvalue,
    the dispersion relation is given locally by
    \begin{equation}
      \label{eq:loc_disp_form_double}
      \lambda - \lambda_0 = \pm |\alpha| |\vec{\kappa} - \vec{\kappa}_0|
      + O(\abs{\vec{\kappa}-\vec{\kappa}\sb 0}^2),
\qquad
\alpha\in\C.
    \end{equation}
    Note that $\alpha$ may be equal to zero.
  \end{enumerate}
\end{lemma}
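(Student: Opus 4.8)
The plan is to exploit the rotational symmetry of the dispersion relation established in Corollary~\ref{cor:sym_disp} together with the Hilbert--Weyl theory of functions invariant under a compact group action, exactly as foreshadowed in the introduction. Since $\vec{\kappa}_0$ is a fixed point of the rotation $\hat{R}$ (acting in $\vec{\kappa}$ coordinates as rotation by $2\pi/3$ about $\vec{\kappa}_0$), every sheet of the dispersion relation, as well as any symmetric function of a group of sheets, is invariant under $\vec{w} \mapsto \rho\, \vec{w}$ where $\vec{w} := \vec{\kappa} - \vec{\kappa}_0$ and $\rho$ is the planar rotation by $2\pi/3$. I will treat the simple and the double case separately.

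First I would handle the simple eigenvalue case (item~\ref{item:symm_disp_simple}). By analytic (or at least $C^2$) perturbation theory for the isolated simple eigenvalue $\lambda_0$, the sheet $\lambda_n(\vec{\kappa})$ is smooth near $\vec{\kappa}_0$, so it has a Taylor expansion $\lambda_n - \lambda_0 = \vec{c}\cdot\vec{w} + \tfrac12 \vec{w}^T M \vec{w} + o(|\vec{w}|^2)$. The invariance $\lambda_n(\vec{\kappa}_0 + \rho\vec{w}) = \lambda_n(\vec{\kappa}_0 + \vec{w})$ forces the linear form to satisfy $\vec{c}\cdot\rho\vec{w} = \vec{c}\cdot\vec{w}$ for all $\vec{w}$, hence $\rho^T\vec{c} = \vec{c}$; since $\rho$ has no nonzero fixed vector (its only real eigenvalue issue: $2\pi/3$ rotation fixes only $\vec{0}$), $\vec{c} = \vec{0}$. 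Likewise the quadratic form must satisfy $\rho^T M \rho = M$, and an elementary computation (or the observation that the only $SO(2)$-invariant — a fortiori $\mathbb{Z}_3$-invariant — quadratic forms in the plane are multiples of $|\vec{w}|^2$) gives $M = 2a\,\mathrm{Id}$ for some $a\in\R$, which is real because $\lambda_n$ is real-valued. This yields \eqref{eq:loc_disp_form_single}.

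For the double eigenvalue case (item~\ref{item:symm_disp_double}) I would argue as follows. Let $P(\vec{\kappa})$ be the (rank-two) spectral projection onto the eigenspaces near $\lambda_0$, which is analytic in $\vec{\kappa}$ near $\vec{\kappa}_0$; the two relevant sheets are the eigenvalues of the $2\times 2$ matrix $A(\vec{\kappa}) := P(\vec{\kappa}) H(\vec{\kappa}) P(\vec{\kappa})$ expressed in any smooth local orthonormal frame of $\range P$. Writing $A(\vec{\kappa}) = \lambda_0 + \vec{w}\cdot\vec{B} + O(|\vec{w}|^2)$ with $\vec{B} = (B_1, B_2)$ a pair of Hermitian (in fact, trace being a real analytic function that equals $\lambda_0 + O(|\vec{w}|^2)$ by the simple-eigenvalue analysis applied to $\tfrac12\operatorname{tr}A$, one may take $B_1, B_2$ traceless) $2\times 2$ matrices, the two sheets are $\lambda_0 \pm \sqrt{\det(\vec{w}\cdot\vec{B}) \cdot(-1)} + o(|\vec{w}|) = \lambda_0 \pm \sqrt{Q(\vec{w})} + o(|\vec{w}|)$ where $Q(\vec{w}) = -\det(w_1 B_1 + w_2 B_2)$ is a real quadratic form in $\vec{w}$, nonnegative since the eigenvalues are real. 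Now the rotational invariance of the \emph{set} $\{\lambda_+(\vec{\kappa}), \lambda_-(\vec{\kappa})\}$ under $\vec{w}\mapsto\rho\vec{w}$ forces $Q(\rho\vec{w}) = Q(\vec{w})$; by the same Hilbert--Weyl argument as above, $Q(\vec{w}) = |\alpha|^2 |\vec{w}|^2$ for some constant $|\alpha| \ge 0$ (possibly zero). Substituting gives \eqref{eq:loc_disp_form_double}.

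The main obstacle I anticipate is the bookkeeping in the double case: making rigorous the reduction to the $2\times 2$ matrix $A(\vec{\kappa})$ and the claim that its off-diagonal/traceless part is linear-to-leading-order with the higher-order remainder genuinely $o(|\vec{w}|)$ after taking the square root (the square root is only Lipschitz, not smooth, at a degeneracy, so one must be careful that $o(|\vec{w}|^2)$ inside the radical becomes $o(|\vec{w}|)$ outside — this is fine, but needs a clean statement). One must also ensure $P(\vec{\kappa})$ really is analytic, which requires $\lambda_0$ to be isolated from the rest of $\sigma(H(\vec{\kappa}_0))$ — true for these operators since $H(\vec{\kappa})$ has discrete spectrum. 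Everything else is the standard linear-algebra fact that a $\mathbb{Z}_3$-invariant real quadratic (or linear) form on $\R^2$ is radial (resp. zero), which I would either prove in one line by averaging over the group or relegate to the appendix on linear algebra.
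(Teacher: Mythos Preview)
Your argument is correct, but it takes a somewhat different route from the paper's own proof, especially in the double case.

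The paper does not reduce to a $2\times 2$ matrix via the spectral projection. Instead it invokes the general fact (from analytic Fredholm theory) that the dispersion relation is an analytic variety, i.e.\ the zero set of a single real-analytic function $F(\lambda,\vec{\kappa})$. Since $F$ inherits the $\mathbb{Z}_3$ rotational symmetry in $\vec{\kappa}$, Hilbert--Weyl gives $F(\lambda,\vec{\kappa}) = F_0(\lambda) + F_2(\lambda)\,|\vec{\kappa}-\vec{\kappa}_0|^2 + o(|\vec{\kappa}-\vec{\kappa}_0|^2)$. The simple and double cases then correspond to $F_0$ having a simple or double zero at $\lambda_0$, and the local forms \eqref{eq:loc_disp_form_single} and \eqref{eq:loc_disp_form_double} are read off from the resulting expansions by a sign-change argument. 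This treats both cases uniformly and avoids any spectral-projection bookkeeping, at the price of appealing to the analytic-variety structure of the dispersion relation.

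Your approach, by contrast, stays entirely within Kato-type perturbation theory: in the simple case you Taylor-expand the analytic eigenvalue branch directly, and in the double case you compress $H(\vec{\kappa})$ to the analytic rank-two spectral projection, obtaining a $2\times 2$ Hermitian matrix whose eigenvalues are exactly $\lambda_\pm(\vec{\kappa})$. Rotational invariance of the eigenvalue \emph{pair} then forces the trace of the linear part to vanish and the discriminant $Q(\vec{w})=-\det(w_1B_1+w_2B_2)$ to be radial, which is the same elementary linear-algebra fact the paper uses. Your remainder analysis (that $\sqrt{|\alpha|^2|\vec{w}|^2 + O(|\vec{w}|^3)} = |\alpha|\,|\vec{w}| + o(|\vec{w}|)$, with the $\alpha=0$ case handled separately) is fine. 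The payoff of your route is that the $2\times 2$ reduction is exactly what is needed later in Section~\ref{sec:conical} anyway, so your proof dovetails naturally with the subsequent computation of $\alpha$; the paper's proof is shorter and self-contained but has to redo the perturbation-theoretic setup in Section~\ref{sec:conical}.
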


We note that using perturbation theory together with symmetry in
Sections~\ref{sec:appl_pert} and \ref{sec:point0} below, it will be
possible to make further conclusion about $\alpha$ appearing in
equation~\eqref{eq:loc_disp_form_double}..

\begin{proof}
  We start by remarking that by standard perturbation theory the
  number of eigenvalues close to $\lambda_0$ in the vicinity of the
  point $\vec{\kappa}_0$ remains equal to the multiplicity of
  $\lambda_0$ at $\vec{\kappa}_0$.
  
  We know from general theory of analytic Fredholm operators
  \cite{Zai+_rms75} that the dispersion relation is an analytic
  variety, i.e.\ given by an equation
  \begin{equation}\label{f-is-zero}
    F(\lambda, \vec{\kappa}) = 0,
  \end{equation}
  where $F$ is a real-analytic function.  Without loss of generality,
  consider the point $\vec{\kappa}_0=0$.  It is an easy special case
  of Hilbert-Weyl theorem on invariant functions
  \cite{Weyl_classicalgroups} (see also
  \cite[XII.4]{GolSch_groupsbif2}), that if a real-analytic function
  $f(\kappa_1,\kappa_2)$ is symmetric with respect to rotations by
  $2\pi/3$ around the origin, it can be represented as
  $
  f(\kappa_1,\kappa_2) = g(\kappa_1^2+\kappa_2^2,\,
  \kappa_1^3-3\kappa_1\kappa_2^2,\, \kappa_2^3-3\kappa_2\kappa_1^2)
  $,
  with some real-analytic $g$.
  %
  %
  %
  Therefore,
  \eqref{f-is-zero}
  takes the form
  \begin{eqnarray}
    \label{g-is-zero}
    G(\lambda,\kappa_1^2+\kappa_2^2,
    \kappa_1^3-3\kappa_1\kappa_2^2,\, \kappa_2^3-3\kappa_2\kappa_1^2)=0,
  \end{eqnarray}
  with $G$ real-analytic in all the variables.
  
  If $\lambda=\lambda_0$ is a simple root,
  \[
  \p\sb\lambda G\at{(\lambda_0,0,0,0)}
  =\p\sb\lambda F\at{(\lambda_0,\vec{0})}
  \ne 0;
  \]
  by the implicit function theorem,
  \eqref{g-is-zero} defines
  $\lambda
  =\Lambda(\kappa_1^2+\kappa_2^2,
  \kappa_1^3-3\kappa_1\kappa_2^2,\, \kappa_2^3-3\kappa_2\kappa_1^2)$,
  with $\Lambda$ analytic in all three variables,
  and \eqref{eq:loc_disp_form_single} follows.
  
  If $\lambda=\lambda_0$ is a double root, we have
  \[
  \p\sb\lambda G\at{(\lambda_0,0,0,0)}
  =\p\sb\lambda F\at{(\lambda_0,\vec{0})}
  =0,
  \qquad
  \p\sb\lambda^2 G\at{(\lambda_0,0,0,0)}
  =\p\sb\lambda^2 F\at{(\lambda_0,\vec{0})}
  \ne 0.
  \]
  Without loss of generality, we assume that
  $\p\sb\lambda^2 G\at{(\lambda_0,0,0,0)}=2$.
  Then we have
  \begin{eqnarray}
    \label{eq:F_exp_double}
    &&
    F(\lambda, \vec{\kappa})
    =
    G(\lambda,\kappa_1^2+\kappa_2^2,
    \kappa_1^3-3\kappa_1\kappa_2^2,\, \kappa_2^3-3\kappa_2\kappa_1^2)
    \nonumber
    \\
    &&
    =
    (\lambda-\lambda_0)^2
    +a
    \left|\vec{\kappa}\right|^2 
    + O\left((\lambda-\lambda_0)^3\right) 
    + O\left((\lambda-\lambda_0)\left|\vec{\kappa}\right|^2\right)
    + O\left(\left|\vec{\kappa}\right|^3\right).
  \end{eqnarray}
  Note that the coefficient at $\left|\vec{\kappa}\right|^2$
  satisfies $a\le 0$ or else $F(\lambda,\vec{\kappa})$ would be
  strictly positive for $(\lambda,\vec{\kappa})$ close to
  $(0,\vec{0})$; thus, there would be no eigenvalues $\lambda$ for
  $\vec{\kappa}$ arbitrarily close to $\vec{\kappa}=\vec{0}$.
  
  If $a<0$,
  there is
  $\delta>0$ small enough
  and
  $K>0$ large enough so that
  for $\abs{\vec{\kappa}}<\delta$
  the function $F$ changes sign
  for $\lambda$ between $\lambda_0 +
  |a|^{1/2}\left|\vec{\kappa}\right| \pm
  K\left|\vec{\kappa}\right|^2$
  and also for $\lambda$ between
  $\lambda_0 - |a|^{1/2}\left|\vec{\kappa}\right| \pm
  K\left|\vec{\kappa}\right|^2$.
  Thus, the eigenvalue $\lambda$
  satisfies \eqref{eq:loc_disp_form_double}.

  If $a=0$, then
  we need higher order terms
  in the expansion of \eqref{eq:F_exp_double}:
  \begin{eqnarray}
    &&
    G(\lambda,\kappa_1^2+\kappa_2^2,
    \kappa_1^3-3\kappa_1\kappa_2^2,\, \kappa_2^3-3\kappa_2\kappa_1^2)
    \nonumber
    \\
    &&
    =(\lambda-\lambda_0)^2
    +c_0(\lambda-\lambda_0)\left|\vec{\kappa}\right|^2
    +c_1(\kappa_1^3-3\kappa_1\kappa_2^2)
    +c_2(\kappa_2^3-3\kappa_2\kappa_1^2)
    \nonumber
    \\
    &&
    \qquad
    +O\left((\lambda-\lambda_0)^2\left|\vec{\kappa}\right|^2\right)
    + O\left((\lambda-\lambda_0)^3\right) 
    + O\left(\left|\vec{\kappa}\right|^4\right)
    \nonumber
    \\
    &&
    =\mu^2
    +c_1(\kappa_1^3-3\kappa_1\kappa_2^2)
    +c_2(\kappa_2^3-3\kappa_2\kappa_1^2)
    + O\left(\mu^3\right)
    + O\left(\mu^2|\vec{\kappa}|^2\right)
    + O\left(\left|\vec{\kappa}\right|^4\right),
    \nonumber
  \end{eqnarray}
  where
  $\mu=\lambda-\lambda_0+c_0\left|\vec{\kappa}\right|^2/2$.
  We claim that
  $c_1=c_2=0$.
  For example, if we had $c_1>0$,
  then
  there would be $\delta>0$ such that
  $G$ is positive-definite
  for $|\mu|<\delta$,
  $\kappa_1\in(0,\delta)$, and $\kappa_2=0$;
  thus, there would be no eigenvalues $\lambda$
  for particular $\vec{\kappa}$ arbitrarily close
  to $\vec{\kappa}=\vec{0}$,
  leading to a contradiction.
  Once $c_1=c_2=0$,
  the relation
  $\mu^2+O(\mu^3)+O(\mu^2|\vec{\kappa}|^2)
  +O(|\vec{\kappa}|^4)=0$
  allows us to conclude that
  $\mu=O(|\vec{\kappa}|^2)$,
  which results in \eqref{eq:loc_disp_form_double} with $\alpha=0$.
\end{proof}

\section{Degeneracies in the spectrum at the point $\pm\vec{k}^*$}
\label{sec:mult}

We have seen in Section~\ref{sec:reduced_symm} that the points
$\vec{k} = \pm\vec{k}^*$ are special in that the operator
$H(\pm\vec{k}^*)$ has a large symmetry group.  In the next
subsection we give a review of the mechanism due to which symmetries give
rise to degeneracies in the spectrum.

\subsection{A review of representation theory background}
\label{sec:representations}

Let $H$ be a self-adjoint operator (``Hamiltonian'') acting on a
separable Hilbert space $\spaceH$.  Let $\groupS = \{\mathrm{Id}, S_1,
\ldots \}$ be a finite group of unitary operators on $\spaceH$ 
(the ``symmetries'' of $H$) which commute with $H$.

\begin{remark}
  It is assumed implicitly that the domain of $H$ is invariant
  under the action of operators $S\in\groupS$.  Such technical details
  will be omitted unless they have some importance to the task at hand.
\end{remark}

It is well-known (see,
e.g.~\cite{Wigner_group_theory,GoodmanWallach_symmetry}) that in the
circumstances described above, there is an \emph{isotypic
  decomposition} of $\spaceH$ into a finite orthogonal sum of
subspaces each carrying copies of an irreducible representation $\rho$ of
$\groupS$.  More precisely,
\begin{equation*}
  \spaceH = \bigoplus_{\rho} \spaceH_\rho,
\end{equation*}
where for any two vectors $v_1, v_2 \in \spaceH_\rho$, there is an
isomorphism between the spaces
\begin{equation*}
  [\groupS v_1] = \Span\left\{ Sv_1 : S\in\groupS \right\}
  \qquad \mbox{and} \qquad
  [\groupS v_2] = \Span\left\{ Sv_2 : S\in\groupS \right\},
\end{equation*}
which preserves the group action on the spaces (i.e.\ commutes with all
$S\in\groupS$). The dimension of $[\groupS v]$ is coincides with the
dimension of the representation $\rho$.

\begin{example}
  Let $\spaceH = L^2(\R)$ and $\groupS$ be the cyclic group of order 2
  generated by the reflection $x \mapsto -x$ or, more precisely,
  \begin{equation*}
    S: f(x) \mapsto f(-x).
  \end{equation*}
  Then $\spaceH = \spaceH_{\mathrm{even}} \oplus
  \spaceH_{\mathrm{odd}}$, where
  \begin{equation*}
    \spaceH_{\mathrm{even}} = \left\{ f\in \spaceH: f(-x) = f(x)
    \right\},
    \qquad
    \spaceH_{\mathrm{odd}} = \left\{ f\in \spaceH: f(-x) = -f(x)
    \right\}.
  \end{equation*}
  Then $\spaceH_{\mathrm{even}}$ carries infinitely many copies of the
  \emph{trivial} representation of $\groupS$:
  \begin{equation*}
    \mathrm{Id} \mapsto (1), \qquad S \mapsto (1),
  \end{equation*}
  while $\spaceH_{\mathrm{odd}}$ carries infinitely many copies of the
  \emph{alternating} representation of $\groupS$:
  \begin{equation*}
    \mathrm{Id} \mapsto (1), \qquad S \mapsto (-1).
  \end{equation*}
  Both representations are one-dimensional.  Note that the
  decomposition of a $\spaceH_\rho$ into irreducible copies is not
  unique.
\end{example}

Each isotypic component $\spaceH_\rho$ is invariant with respect to
$H$: either $Hv = 0$ or $H$ provides an isomorphism between subspaces
$[\groupS v]$ and $[\groupS Hv]$.

If $H$ has discrete spectrum then the restriction of $H$ to
$\spaceH_\rho$ has eigenvalues with multiplicities divisible by the
dimension of $\rho$.  Indeed, by commuting $\groupS$ and $H$ we see
that if $v$ is an eigenvector of $H$, then the entire subspace
$[\groupS v]$ is an eigenspace of $H$ with the same eigenvalue.

It is sometimes stated in the physics literature that if the group of
symmetries of an operator has an irreducible representation $\rho$,
the operator will have eigenspaces carrying this irreducible
representation; in particular, the corresponding eigenvalue will have
multiplicity equal to the dimension of $\rho$.
This implicitly assumes that
the isotypic component corresponding to this
representation is present in the domain of operator
(for examples to the contrary, see
e.g. \cite[Sec.~7.2]{BanParBen_jpa09}
or
Example~\ref{exm:graph_ex_0} below).
Thus the fundamental question in describing spectral
degeneracies is finding the isotypic decomposition of the domain of
the operator.

\subsubsection{$R$ and $F$ symmetry}
\label{sec:induced_RF}

Suppose the operator $H$ on the whole space has $R$ and $F$ symmetry.
The symmetries satisfy the relations $R^3 = F^2 = \mathrm{id}$ and
$FR^2 = RF$ and the symmetries group $\groupS$ is thus isomorphic to
the symmetric group $S_3$.  The representations are
\begin{align}
  \label{eq:RF_representation1triv}
  &R \mapsto (1), &&F \mapsto (1) &&\mbox{``trivial''},\\
  \label{eq:RF_representation1alt}
  &R \mapsto (1), &&F \mapsto (-1) &&\mbox{``alternating''},
\end{align}
and
\begin{equation}
  \label{eq:RF_representation2}
  R \mapsto
  \begin{pmatrix}
    \tau & 0\\
    0 & \overline{\tau}
  \end{pmatrix}, 
  \qquad F \mapsto
  \begin{pmatrix}
    0 & 1\\
    1 & 0
  \end{pmatrix}
 \qquad \mbox{"standard"},
\end{equation}
where $\tau$ is the third root of unity,
\begin{equation}
  \label{eq:tau_def}
  \tau := e^{2\pi i/3}.
\end{equation}

We thus expect that the two-dimensional representation will give rise
to eigenvalues of $H$ of multiplicity at least 2.

\subsubsection{$R$ and $V$ symmetry}
\label{sec:isotypic_RV}

On the face of it, the group generated by $R$ and $V$ is the group of
rotations by $\pi/3$, which is abelian and therefore has
one-dimensional representations only.  This would normally suggest
there are no persistent degeneracies in the spectrum.  However, the
symmetry relevant to us, as explained in
section~\ref{sec:reduced_symm}, is $V$ combined with complex
conjugation.  The representation $\rho(\Vbar)$ must be
an antiunitary operator, i.e.\ an operator $A$ satisfying
\begin{equation}
  \label{eq:def_antiunitary}
  A(\alpha v) = \cc{\alpha} (Av),
  \qquad \langle Av, Au\rangle = \langle u,v\rangle,
\end{equation}
which is a complex conjugation followed by the
multiplication by a unitary matrix.  Representations combining unitary
and antiunitary operators have been fully classified by Wigner
\cite[Chap.~26]{Wigner_group_theory} (see also \cite{BraDav_rmp68} for
a summary of the method), who called them ``corepresentations''.  In
short, one looks at the representation of the maximal unitary subgroup
(in our case, the cyclic group of rotations $R$) and, from them,
follows a simple prescription to construct all corepresentations.
This prescription is essentially constructing the induced
representation \emph{\`a la Frobenius}, although in the case when the induced representation decomposes into
two copies of an irrep, one takes only one copy.  

The group $\groupS$ has two corepresentations, given by
\begin{align}
  \label{eq:RV_representations1}
  &R : z \mapsto z,&  &\Vbar : z \mapsto \overline{z},\\
  \label{eq:RV_representations2}
  &R : 
  \begin{pmatrix}
    z_1 \\ z_2
  \end{pmatrix}
  \mapsto
  \begin{pmatrix}
    \tau z_1 \\ \overline{\tau} z_2
  \end{pmatrix},&  
  &\Vbar : 
  \begin{pmatrix}
    z_1 \\ z_2
  \end{pmatrix}
  \mapsto
  \begin{pmatrix}
    \overline{z_2} \\ \overline{z_1}
  \end{pmatrix}.
\end{align}

To see how they arise, we start with the representation
$\rho_1:\ R \mapsto (\tau)$ of the subgroup
$\groupR = \{\mbox{id}, R, R^2\}$, acting on a 1-dimensional space
spanned by $\vec{v}_1$.  We denote $\vec{v}_2 = \Vbar\vec{v}_1$ and
calculate
\begin{align}
  \label{eq:inducedV}
  & R \vec{v}_1 = \tau \vec{v}_1, 
  && \Vbar \vec{v}_1 = \vec{v}_2\\
  & R \vec{v}_2 = R\Vbar \vec{v}_1 = \Vbar R \vec{v}_1 
  = \Vbar \tau \vec{v}_1 = \cc{\tau} \Vbar \vec{v}_1 
  = \cc{\tau} \vec{v}_2, 
  && \Vbar \vec{v}_2 = \Vbar^2 \vec{v}_1 = \vec{v}_1.
\end{align}
This is the representation (\ref{eq:RV_representations2}) shown above.

The induced representation of $\rho_2: \ R \mapsto (\tau^2)$ is
the same, after the change of basis
$\vec{v_1} \leftrightarrow \vec{v}_2$.

The induced representation of the trivial representation $\rho_0: \ R
\mapsto (1)$ of $\groupR$ turns out to be
\begin{align}
  \label{eq:inducedV0}
  & R \vec{v}_1 = \vec{v}_1, 
  && \Vbar \vec{v}_1 = \vec{v}_2\\
  & R \vec{v}_2 = R\Vbar \vec{v}_1 = \Vbar R \vec{v}_1 
  = \Vbar \vec{v}_1 = \vec{v}_2, 
  && \Vbar \vec{v}_2 = \Vbar^2 \vec{v}_1 = \vec{v}_1.
\end{align}
After the change of basis $\vec{u}_1 = \vec{v}_1 + 
\vec{v}_2$, $\vec{u}_2 = i(\vec{v}_1 - \vec{v}_2)$, this
representation factorizes into two copies of representation
(\ref{eq:RV_representations1}) above.

Since we considered every representation of the subgroup $\groupR$,
this exhausts the list of corepresentations of $\groupS$.  We remark
that the bars over $z$ appear in
(\ref{eq:RV_representations1})-(\ref{eq:RV_representations2}) since
$z$ are scalar coefficients in the expansion over
$\{\vec{v}_1,\vec{v}_2\}$ and $\Vbar$ is antilinear,
equation~(\ref{eq:def_antiunitary}).

\subsubsection{$R$ and $C$ symmetry}
\label{sec:isotypic_RC}

As seen in Section~\ref{sec:reduced_symm}, at the point $\vec{k} =
\vec{0}$ the operator $H(\vec{k})$ will retain the symmetry with
respect to rotation $R$ and complex conjugation $C$.  So it is
important to consider the corresponding corepresentations.

Both the derivation and the answer are identical to the case of group
generated by $R$ and $\Vbar$: the symmetry group has two corepresentations, given by
\begin{align}
  \label{eq:RC_representations1}
  &R : z \mapsto z,&  &C : z \mapsto \overline{z},\\
  \label{eq:RC_representations2}
  &R : 
  \begin{pmatrix}
    z_1 \\ z_2
  \end{pmatrix}
  \mapsto
  \begin{pmatrix}
    \tau z_1 \\ \overline{\tau} z_2
  \end{pmatrix},&  
  &C : 
  \begin{pmatrix}
    z_1 \\ z_2
  \end{pmatrix}
  \mapsto
  \begin{pmatrix}
    \overline{z_2} \\ \overline{z_1}
  \end{pmatrix}.
\end{align}

\subsubsection{$R$ and $F_V$ symmetry}
\label{sec:isotypic_RF_V}

Finally, we investigate what happens if the operator is symmetric with
respect to rotation $R$ and vertical reflection $F_V$.  The dual
action of $F_V$ is $(\kappa_1,\kappa_2)\mapsto(\kappa_1,-\kappa_2)$.
To preserve the fixed points $\pm\vec{\kappa}^*$, we need to pair
$F_V$ with $C$, i.e.\ consider the group generated by $R$ and
$\cc{F_V}$.  This group is $S_3$, yet we should be looking at
corepresentations, of which there are three, all one-dimensional,
\begin{align}
  \label{eq:RFV_rep0}
  &R : z \mapsto z,&  &\cc{F_V} : z \mapsto \overline{z},\\
  \label{eq:RFV_rep1}
  &R : z \mapsto \tau z,&  &\cc{F_V} : z \mapsto \overline{z},\\
  \label{eq:RFV_rep2}
  &R : z \mapsto \cc{\tau} z,&  &\cc{F_V} : z \mapsto \overline{z}.
\end{align}

This suggests that a typical problem\footnote{i.e.\ one without
  ``accidental'' degeneracies; it must be mentioned that the
  physically intuitive claim that ``accidental'' degeneracies do not
  happen generically remains, to a large extent, mathematically
  unproven; the best result in this direction is by Zelditch
  \cite{Zel_aif90}.} with these symmetries is not expected to have any
conical points in its dispersion relation.  According to
Lemma~\ref{lemma:point0}, there will still be generic degeneracies at
the point $\vec{0}$ but those are not conical.


\subsection{Degeneracies in the spectrum of $H(\vec{k}^*)$}

The presence of degeneracies in the spectrum of the operator
$H(\vec{k})$ at the points $\pm\vec{k}^*$, which forms a part of
Theorem~\ref{thm:main}, follows directly from the representation
theory.

\begin{lemma}
  \label{lemma:mult}
  Let the self-adjoint operator $H$ be $\Gamma$-periodic and invariant
  under rotation $R$.  The space $\spaceH(\vec{k}^*)$, where
  $\vec{k}^* := \left(2\pi/3, -2\pi/3 \right)$, splits into the
  orthogonal sum
  \begin{equation}
    \label{eq:split_spaceH_star2}
    \spaceH(\vec{k}^*) = \spaceH_0(\vec{k}^*) \oplus
    \spaceH_\perp(\vec{k}^*),
  \end{equation}
  where
  $\spaceH_0(\vec{k}^*) = \{\psi \in \spaceH(\vec{k}^*) : R\psi =
  \psi\}$. This splitting is $H$-invariant.  

  If $H$ is also invariant with respect to at least one of the
  following: reflection $F$ or the conjugated inversion $\Vbar$, then
  all eigenvalues of the operator $H$ restricted to
  $\spaceH_\perp(\vec{k}^*)$ have even multiplicity.  Moreover, each
  eigenspace has an orthonormal basis $\{f^1_n, f^2_n\}$, such that
  \begin{equation}
    \label{eq:spec_basis}
    R f^1_n = \tau f^1_n, \qquad 
    R f^2_n = \cc{\tau}f^2_n, \qquad
    \mbox{and} \qquad
    f^2_n = F f^1_n \quad\mbox{or}\quad f^2_n = \Vbar f^1_n,     
  \end{equation}
  correspondingly.
\end{lemma}

\begin{proof}
  Since $H$ commutes with $R$, the space $\spaceH_0(\vec{k}^*)$ is
  $H$-invariant and, by self-adjointness, so is its orthogonal
  complement $\spaceH_\perp(\vec{k}^*)$.

  If $H$ is also invariant with respect to $\Vbar$, the isotypic
  component corresponding to representation
  (\ref{eq:RV_representations1}) is characterised by $R\vec{v} =
  \vec{v}$ and therefore coincides with $\spaceH_0(\vec{k}^*)$.  Thus
  the space $\spaceH_\perp(\vec{k}^*)$ is the isotypic component of
  representation (\ref{eq:RV_representations2}) and every eigenvalue
  of $H$ on this space is evenly degenerate.  Moreover, each
  eigenspace of dimension $2N$ has an orthonormal basis $\{f^1_n,
  f^2_n\}_{n=1}^N$, such that every pair $f^1_n$ and $f^2_n$ forms a
  basis of representation (\ref{eq:RV_representations2}).
  Namely, for all $z_1, z_2 \in \C$,
  \begin{equation}
    \label{eq:rep_realization}
    R\left(z_1 f^1_n + z_2f^2_n\right) 
    = \tau z_1 f^1_n + \cc{\tau} z_2 f^2_n 
    \qquad\mbox{and}\qquad
    \Vbar\left(z_1 f^1_n + z_2f^2_n\right) 
    = \cc{z_2} f^1_n + \cc{z_1}f^2_n    
  \end{equation}
  whence \eqref{eq:spec_basis} follows.

  By a similar reasoning, if the operator $H$ is $F$-invariant, the
  sum of isotypic components of (\ref{eq:RF_representation1triv}) and
  (\ref{eq:RF_representation1alt}) is characterised by $R\vec{v} =
  \vec{v}$ and therefore coincides with $\spaceH_0(\vec{k}^*)$.
  Again, the space $\spaceH_\perp(\vec{k}^*)$ is the isotypic
  component of the two-dimensional representation
  (\ref{eq:RF_representation2}) and the same conclusion follows.
\end{proof}

\subsection{Explicit splitting of $H(\vec{k}^*)$ and connection to isospectrality}
\label{sec:full_split}

For computation, as well as for better understanding, it is
instructive to split the operator $H(\vec{k}^*)$ further.  It is easy
to show that the space $\spaceH_\perp(\vec{k}^*)$ splits further as
\begin{equation}
  \label{eq:12split}
  \spaceH_\perp(\vec{k}^*) 
  = \spaceH_1(\vec{k}^*) \oplus \spaceH_2(\vec{k}^*) 
  := 
  \{\psi \in \spaceH(\vec{k}^*) : R\psi = \tau \psi\} \oplus
  \{\psi \in \spaceH(\vec{k}^*) : R\psi = \tau^2 \psi\}.
\end{equation}
It is clear that the spaces $\spaceH_j(\vec{k}^*)$, $j=0,1,2$ are the
isotypic components of the full space with respect to the irreducible
representation $\rho_j$, $j=0,1,2$ of the symmetry subgroup of
rotations $\groupR = \{\mbox{id}, R, R^2\}$.  If $H$ is $R$-invariant,
it preserves the spaces $\spaceH_j(\vec{k}^*)$, $j=0,1,2$.

Moreover, the spaces $\spaceH_1(\vec{k}^*)$ and $\spaceH_2(\vec{k}^*)$
are mapped isomorphically to each other by $F$ or by $\Vbar$.  Thus,
if $H$ has appropriate symmetry, the restrictions of $H$ to these
spaces are unitarily equivalent and therefore isospectral.  The double
degeneracy of the spectrum of $H$ on $\spaceH_\perp(\vec{k}^*)$ is a
direct consequence of this fact.

\begin{figure}
  \centering
  \includegraphics[scale=0.75]{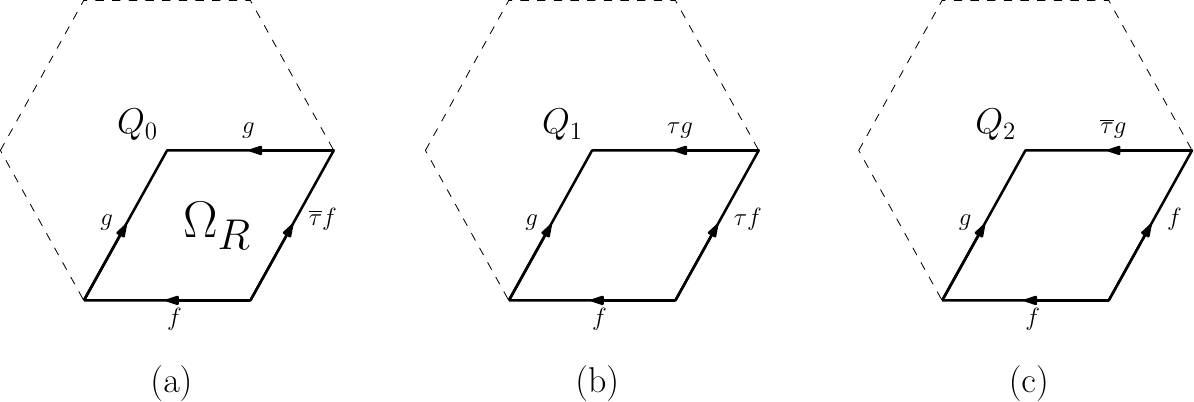}
  \caption{Operators $Q_0$, $Q_1$ and $Q_2$. We
      use the notation $\tau = e^{2\pi i/3}$.}
  \label{fig:operatorsQ}
\end{figure}

We can give an explicit description of the restrictions of $H$ to
$\spaceH_j(\vec{k}^*)$, $j=0,1,2$.  They are unitarily equivalent to
the differential operators $Q_j$ defined as follows.  Consider the
rhombic subdomain $\Omega_R$ covering $1/3$ of the hexagonal
fundamental domain, shown in Fig.~\ref{fig:operatorsQ}.  Denote by
$Q_j$, $j=0,1,2$, the operators having the same differential
expression as $H$ (see, for example, \eqref{def-H}) and with the
boundary conditions specified in Fig.~\ref{fig:operatorsQ}(a), (b) and
(c), correspondingly.  The equivalence of $Q_j$ to $H$ on the space
$\spaceH_j(\vec{k}^*)$ is realized by embedding the functions from
$L^2(\Omega_R)$ into $L^2(\Omega_H)$ by extending them by 0 and using
the operator
\begin{equation}
  \label{eq:T_def}
  T_j = \frac1{\sqrt{3}}\left(I + \tau^j R + \tau^{2j} R^2\right).
\end{equation}

The operators $Q_1$ and $Q_2$ are isospectral, as explained above.
The isospectrality can also be proved by a simple ``transplantation''
argument, similar to the proofs of isospectrality of certain domains
(such as the proof by Buser et al. \cite{Bus+_imrn94} for the
Gordon--Webb--Wolpert pair \cite{GorWebWol_im92}).  It can also be
checked using an algebraic condition of Band, Parzanchevski and
Ben-Shach, see \cite[Cor. 4.4]{BanParBen_jpa09} or
\cite[Cor. 4]{ParBan_jga10}.  Namely, if $\groupS$ is a symmetry group
of the operator $A$ and $H_1$, $H_2$ are subgroups of $\groupS$ with
the corresponding representations $\rho_1$ and $\rho_2$ such that the
induced representations
\begin{equation}
  \label{eq:bpb_condition}
  \Ind_{H_1}^\groupS \rho_1 \simeq \Ind_{H_2}^\groupS \rho_2
\end{equation}
are isomorphic, then the restrictions of $A$ to the isotypic
components of $\rho_1$ and $\rho_2$ are isospectral.
In our case, $H_1 = H_2 = \groupR$, the rotation subgroup, and the
representations $\rho_1$ are $\rho_2$ act by multiplication by
$\tau$ and $\tau^2$, respectively, with the induced representations
being precisely the two-dimensional representations
(\ref{eq:RF_representation2}) and (\ref{eq:RV_representations2}).

From the explicit description of the degenerate eigenstates of
$H(\vec{k}^*)$ as eigenvectors of $Q_1$ and $Q_2$, we get the
following practical corollary.

\begin{corollary}
  \label{cor:suppres}
  For any potential, the degenerate eigenstates of $H(\vec{k}^*)$
  vanish (\emph{are suppressed})
  at the center of the hexagonal fundamental domain.
\end{corollary}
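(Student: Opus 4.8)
The plan is to read off the vanishing directly from the transplantation picture built in Theorem~\ref{thm:mult}. Recall that every eigenfunction $\psi$ of $H(\vec{k}^*)$ associated to the degenerate eigenvalue lives in one of the reducing subspaces $T_1(\Dom Q_1)$ or $T_2(\Dom Q_2)$ (or a combination thereof), and that by construction $T_1 = \frac{1}{\sqrt3}(I + \tau R + \cc\tau R^2)$, $T_2 = \frac{1}{\sqrt3}(I + \cc\tau R + \tau R^2)$, where $R$ rotates about the center $\vec{c}$ of the hexagonal fundamental domain $\Omega_H$. The key observation is that $\vec{c}$ is the unique fixed point of $R$ inside $\Omega_H$, so for any function $\phi$ supported on $\Omega_R$ (extended by zero) we have $(R^j\phi)(\vec{c}) = \phi(R^{-j}\vec{c}) = \phi(\vec{c})$ for $j=0,1,2$. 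Hence at the point $\vec{c}$,
\begin{equation*}
  (T_1\phi)(\vec{c}) = \tfrac1{\sqrt3}\,(1 + \tau + \cc\tau)\,\phi(\vec{c}) = 0,
  \qquad
  (T_2\phi)(\vec{c}) = \tfrac1{\sqrt3}\,(1 + \cc\tau + \tau)\,\phi(\vec{c}) = 0,
\end{equation*}
since $1 + \tau + \cc\tau = 1 + e^{2\pi i/3} + e^{-2\pi i/3} = 0$. (One should note that $\vec{c}$ lies on the boundary of the small rhombus $\Omega_R$, so $\phi(\vec{c})$ is to be interpreted as the boundary value there; the three rotated copies all carry this value and their phases cancel.)

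First I would fix the location of $\vec{c}$ explicitly and verify it is the common fixed point of $R$, $R^2$ acting on $\Omega_H$ — this is immediate from the definition of $R$ and the fact that the chosen fundamental domain $\Omega_H$ is centered at the rotation axis (it is one of the circled fixed points of $\hat R$ in Fig.~\ref{fig:bri_zone}, pulled back to real space). Next I would write a general degenerate eigenfunction as $\psi = T_1\phi_1 + T_2\phi_2$ with $\phi_j$ an eigenfunction of $Q_j$ (using the direct sum decomposition~\eqref{eq:direct_sum_domains}), evaluate at $\vec{c}$ using the phase-cancellation identity above, and conclude $\psi(\vec{c}) = 0$. The same argument transfers to $-\vec{k}^*$ via complex conjugation, exactly as in the last paragraph of the proof of Theorem~\ref{thm:mult}.

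The only genuine subtlety — the part worth stating carefully — is the regularity needed to make pointwise evaluation at $\vec{c}$ meaningful: in the Schr\"odinger (continuum) case eigenfunctions are continuous (indeed smooth away from the potential's singularities, and here $q$ is smooth), so $\psi(\vec{c})$ is well defined; in the discrete-graph case one interprets $\vec{c}$ as the central vertex and the identity $1+\tau+\cc\tau=0$ still forces $\psi_{\vec c}=0$ provided the central vertex is itself a fixed point of $R$ (which it is in the honeycomb graph). I expect no real obstacle beyond flagging this regularity point and the boundary-value interpretation of $\phi_j$ at $\vec{c}$; the algebraic heart of the statement is just the vanishing of the sum of cube roots of unity, which is already visible in the definition of $T_1$ and $T_2$.
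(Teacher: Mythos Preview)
Your argument is correct and captures the same idea as the paper: rotation-equivariance forces vanishing at the fixed point of $R$. The paper reads this off from the boundary condition $g=\tau g$ imposed on the $Q_1$-eigenfunction at the corner $\vec{c}$ and then invokes that $\vec{c}$ is fixed by $F$ or $V$ to handle the second eigenfunction, whereas you treat both $T_1\phi$ and $T_2\phi$ uniformly via $1+\tau+\cc\tau=0$ --- a minor streamlining of the same mechanism.
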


\begin{proof}
  At the top left corner of the rhombic subdomain,
  Fig.~\ref{fig:operatorsQ}(a), the boundary conditions require
  $g=\tau g$.  This point is fixed by either the reflection or the
  inversion, thus both eigenfunctions have a zero there.
\end{proof}

\subsection{Graph examples}
\label{sec:graph_examples}

While the splitting of Section~\ref{sec:full_split} was formulated for
continuous differential operators in $\R^2$, the method applies to
other models, such as graphs, with a little adjustment.  Here we
explain, by example, the construction of the operators $Q_j$.

\begin{figure}[t]
  \centering
  \includegraphics[scale=0.7]{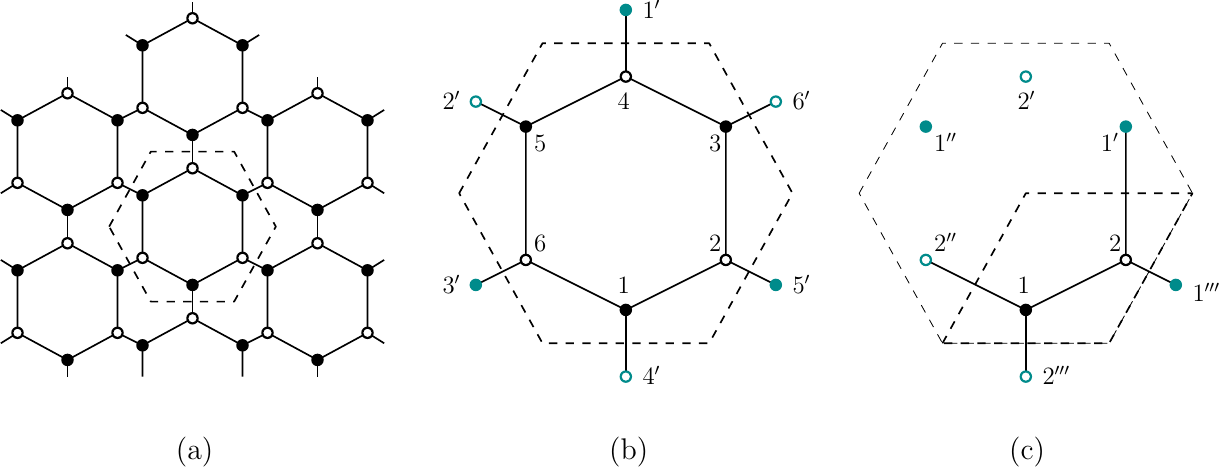}
  \caption{A discrete graph with symmetries $R$ and $F$.}
  \label{fig:graph_example}
\end{figure}

\begin{example}
  \label{exm:graph_structure}
  It is easier to start with an example that has a richer
  structure, such as the periodic graph of
  Fig.~\ref{fig:graph_example}(a).  It is assumed that the black and
  white vertices have different potential, therefore $V$ symmetry is
  broken, while $R$ and $F$ symmetries are still present.

  In part (b) the structure of the graph inside the dashed fundamental
  domain is magnified.  Gray vertices outside of the fundamental
  domain are obtained by shifts from the corresponding vertices
  inside.  For example, $f_{5'} = \omega_2 f_5$, therefore the
  operator $H(\vec{k})$ at site $2$ acts as
  \begin{equation*}
    (H(\vec{k})f)_2 = (f_2 - f_3) + (f_2 - f_1) + r(f_2 - \omega_2 f_5) + q_2 f_2,
  \end{equation*}
  where we took the longer sides in the structure of
  Fig.~\ref{fig:graph_example}(a) to have weight $1$ and the shorter
  sides weight $r$ (usually, the weight is taken to be inversely
  proportional to length).  The entire operator $H(\vec{k})$ is
  \begin{equation*}
    H(\vec{k}) =
    \begin{pmatrix}
      q_1 & -1 & 0 & r\cc{\omega_1}\omega_2 & 0 & -1 \\
      -1 & q_2 & -1 & 0 & -r\omega_2 & 0 \\
      0 & -1 & q_1 & -1 & 0 & -r\omega_1 \\
      r\omega_1\cc{\omega_2} & 0 & -1 & q_2 & -1 & 0 \\
      0 & -r\cc{\omega_2} & 0 & -1 & q_1 & -1 \\
      -1 & 0 & r\cc{\omega_1} & 0 & -1 & q_2
    \end{pmatrix},
  \end{equation*}
  with $\omega_j$
  defined in \eqref{eq:notatiok_k_omega};
  above,
  for simplicity, the potential $q$ was made to absorb the
  weighted degree of the corresponding vertex.
  
  With $q_1 = \sqrt{3}$, $q_2 = 0$ and $r=\sqrt{7}$, the eigenvalues of
  $H(\vec{k}^*)$, calculated numerically, are
  \begin{equation}\label{egn}
   -2.5097,\quad {-2.5097},\quad {-1.6753},\quad 3.4074,\quad 4.2418,\quad
    4.2418.
  \end{equation}
  
  To find the operator $Q_1$ acting on the two darker vertices in
  Fig.~\ref{fig:graph_example}(c), we use the definition of the space
  $\spaceH_1(\vec{k}^*)$, equation~(\ref{eq:12split}): for the gray
  vertices we have
  \begin{equation*}
    f_{1'} = \tau f_1, \quad f_{1''} = \cc{\tau} f_1, \quad
    f_{2'} = \tau f_2
  \end{equation*}
  by rotation and then
  \begin{equation*}
    f_{1'''} = \cc{\tau} f_{1''} = \tau f_1, \quad
    f_{2'''} = \tau f_{2'} = \cc{\tau} f_2
  \end{equation*}
  by translation (see Fig.~\ref{fig:floquet_reduction}(c) with
  $\omega_1 = \tau$ and $\omega_2 = \cc{\tau}$).  We thus
  get
  \begin{equation*}
    Q_1 =
    \begin{pmatrix}
      q_1 & -1 - \cc{\tau} - r \cc{\tau} \\
      -1 - \tau - r\tau & q_2
    \end{pmatrix}.
  \end{equation*}
  With the above choice of constants, the eigenvalues of $Q_1$ are
  \begin{equation*}
    -2.5097
\quad\mathrm{and}\quad 4.2418
  \end{equation*}
  which matches the double eigenvalues of
  $H(\vec{k}^*)$ in \eqref{egn}.  The matrices $Q_0$ and $Q_2$ can be similarly calculated
  as
  \begin{equation*}
    Q_0 =
    \begin{pmatrix}
      q_1 & -2 - \tau r \\
      -2 - \cc{\tau} r & q_2
    \end{pmatrix}
    \qquad \mbox{and} \qquad
    Q_2 =
    \begin{pmatrix}
      q_1 & -1 - \tau - r \\
      -1 - \cc{\tau} - r & q_2
    \end{pmatrix}.
  \end{equation*}
\end{example}

\begin{example}
  \label{exm:two_atom}
  We will now explain the application of our theory to the most basic
  example: the tight-binding approximation of graphene structure, with
  vertices of a discrete graph representing carbon atoms, see
  Fig.~\ref{fig:graphene_tba}(a).

  \begin{figure}[t]
    \centering
    \includegraphics[scale=0.7]{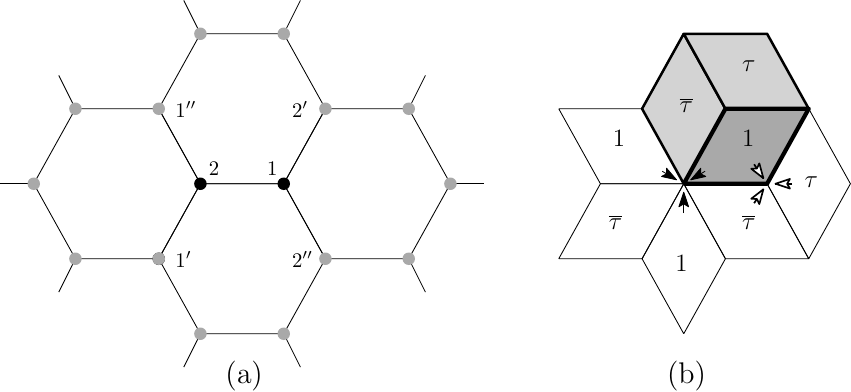}
    \caption{Graphene structure with two vertices per fundamental domain.}
    \label{fig:graphene_tba}
  \end{figure}

  The operator $H(\vec{k})$ acts on a 2-dimensional space over
  vertices $1$ and $2$ (all other vertices of the graph are obtained
  by shifts).  It acts as
  \begin{align*}
    (H(\vec{k}) f)_1 &= -f_2 - f_{2'} - f_{2''} + q f_1\\
    &= -f_2 - \omega_1 f_2 - \omega_2 f_2 + q f_1,
  \end{align*}
  and similarly for $(H(\vec{k})f)_2$.
  Note that the atoms are identical,
  hence $q_1=q_2=q$.  When $\omega_1 = \cc{\omega_2} = \tau$, the
  matrix $H$ is $q$ times identity.  

  The eigenproblem of the rhombic subdomain can be gleaned from
  Fig.~\ref{fig:graphene_tba}(b).  In particular, $f_1$ is forced to
  be zero: which can be seen from the equality
  $f_1 = \tau f_1 = \cc{\tau} f_1$ highlighted by the empty arrows in
  Fig.~\ref{fig:graphene_tba}(b), or from the boundary conditions for
  the bottom right corner of Fig.~\ref{fig:operatorsQ}(b).  On the
  other hand, the value $f_2$ is unrestricted and $(Q_1f)_2 = q f_2$.
  The complementary eigenfunction (eigenfunction of the operator
  $Q_2$) is localized on the vertex $1$.
\end{example}

\section{Conical structure around a degeneracy}
\label{sec:conical}

\subsection{General perturbation theory}

Here we list some general facts from the perturbation theory of
operators depending on parameters, following \cite{Kato,Zai+_rms75,Gru_mn09}.
Let 
\begin{equation*}
  H(r) = H_0 + (r-r_0) H_1 + O\left(|r-r_0|^2\right)  
\end{equation*}
be an analytic family of self-adjoint operators depending on one
parameter with an isolated doubly degenerate eigenvalue $\lambda_0$ at
$r=r_0$.  The eigenvalue then splits into two analytic branches
\begin{equation*}
  \lambda^{\pm}(r) = \lambda_0 + \lambda_1^{\pm} (r-r_0) 
  + O\left(|r-r_0|^2\right).
\end{equation*}
The linear terms can be found as the eigenvalues of the $2\times2$
matrix $P H_1 P$, where $P$ is the projector onto the eigenspace of
$\lambda_0$.  The corresponding eigenvectors expand as
\begin{equation}
  \label{eq:pert_eigenvectors}
  \psi^{\pm}(r) = \psi^{\pm}_0 
  + O\left(\frac{|r-r_0|}{|\lambda_1^+ - \lambda_1^-|}\right),
\end{equation}
where $\psi_0^{\pm}$ are the eigenvectors of $P H_1 P$ (which are in
the eigenspace of $H_0$).  All eigenvectors are assumed to be normalized.

If $H = H(k_1,k_2)$ is an analytic function of two parameters and
$\vec{k}_0$ is the point of double multiplicity of the eigenvalue $\lambda_0$, the
one-parameter theory is still valid in every direction $\delta k_1 = r
\cos(\phi)$, $\delta k_2 = r \sin(\phi)$.  The parameters $\lambda_1^{\pm}$
now depend on the direction $\phi$.

We will say that a doubly degenerate eigenvalue is a conical point if
$\lambda_1^+(\phi) \neq \lambda_1^-(\phi)$ in every direction; more
precisely,

\begin{definition}
  Let $H(\vec{k})$ be an analytic family of self-adjoint operators.
  We will say that $H(\vec{k})$ has a \emph{nondegenerate conical point} at
  $\vec{k}_0$ with an eigenvalue $\lambda_0$ if
  $\lambda_0\in\sigma\sb{d}(H(\vec{k}_0))$ is an isolated eigenvalue
  of geometric multiplicity $2$, and in an
  open neighborhood of $\vec{k}_0$
  the eigenvalues are given by
  \begin{equation}
    \label{eq:nondeg_splitting}
    \lambda^{\pm}(\vec{k})
    =\lambda_0+\delta\vec{k}\cdot\vec{n}
    \pm\sqrt{Q(\delta\vec{k})} + O(\abs{\delta\vec{k}}^2),
    \qquad \delta\vec{k} = \vec{k} - \vec{k}_0,
  \end{equation}
  where $\vec{n}\in\R^2$ and
  $Q(\vec{k})$ is a positive-definite quadratic form.  The point
  $\vec{k}_0$ is a \emph{fully degenerate conical point} if the same
  is true with $Q \equiv 0$.
\end{definition}

From Lemma~\ref{lem:symm_disp} we know that the points of double
degeneracy at $\pm \vec{k}^*$ and $\vec{0}$ must either be
nondegenerate circular cones (in $\kappa$ coordinates) or fully
degenerate cones.  It turns out that the point $\vec{0}$ is always a
fully degenerate cone; we will also derive a condition for
nondegeneracy of the cone at $\pm \vec{k}^*$.

In the first order of perturbation theory (i.e.\ ignoring the
$O(|\vec{\delta k}|^2)$ term in (\ref{eq:nondeg_splitting})), the dispersion
surface is given by the solution to\footnote{This is a standard
  procedure in quantum mechanics or solid state physics (known as
  $k\cdot p$ method in the latter); for a mathematical proof, see
  \cite{Gru_mn09}.}
\begin{equation}
  \label{eq:det_dispersion}
  \det \left( \delta k_1\, h_{1} + \delta k_2\, h_{2} - (\lambda-\lambda_0) \right) = 0,
\end{equation}
where the $2\times2$ Hermitian matrices $h_1$ and $h_2$ are given by
\begin{equation}
  \label{eq:deriv_matrices}
h_{j}
= \Phi^* \frac{\partial H}{\partial k_j} \Phi
=
\begin{bmatrix}
\langle f_1,\frac{\partial H}{\partial k_j} f_1\rangle
&
\langle f_1,\frac{\partial H}{\partial k_j} f_2\rangle
\\
\langle f_2,\frac{\partial H}{\partial k_j} f_1\rangle
&
\langle f_2,\frac{\partial H}{\partial k_j} f_2\rangle
\end{bmatrix},
\qquad
j=1,\,2.
\end{equation}
Here
$\Phi=[f_1,f_2]$ is a matrix whose columns are
the orthonormal basis vectors of the degenerate eigenspace
at $(0,0)$:
\[\Phi : \C^2 \to \spaceH,
\qquad
\Phi:
\begin{bmatrix}c_1\\c_2\end{bmatrix}
\mapsto
c_1 f_1+c_2 f_2.
\]
The projector $P$ onto the eigenspace is then given by $P
= \Phi\Phi^*$.

\subsection{Perturbation in the presence of symmetry}
\label{sec:pert_symm}

Naturally, the presence of symmetry imposes constraints on the form of
the matrices $h_{1}$ and $h_{2}$.  As we will see in Lemma XX and YY
below, these constraints are often powerful enough to give an explicit
form of the dispersion relation.

\begin{lemma}
  \label{lem:intertwiner}
  Let $H(\vec{k})$ be an analytic family of self-adjoint operators and
  the unitary operator $S$ satisfy
  \begin{equation}
    \label{eq:daul_action}
    S H(\vec{k}) S^* = H(\hat{S} \vec{k}),
  \end{equation}
  where the matrix $\hat{S}$ encodes the action of $S$ on the dual
  space.  Let $\vec{k}_0$ be a fixed point of $\hat{S}$ and
  $[f_1, \ldots, f_m]$ be an orthonormal basis of an eigenspace of
  $H(\vec{k}_0)$.  Let the unitary matrix $A_S$ encode the action of
  $S$ in this basis, namely
  \begin{equation}
    \label{eq:AS_def}
    S \Phi = \Phi A_S, \qquad \mbox{where }\Phi:\C^m\to \spaceH,
    \ \Phi \vec{c} = c_1 f_1 + \ldots c_m f_m.
  \end{equation}
  Then $ h_{\vec{\delta k}}
  := \delta k_1\, h_1 + \delta k_2\, h_2$ satisfies
  \begin{equation}
    \label{eq:intertwiner_unit}
    A_S h_{\vec{\delta k}} A_S^* = h_{\hat{S}\vec{\delta k}}.
  \end{equation}

  If $S$ is an antiunitary operator satisfying (\ref{eq:daul_action})
  and $S \Phi = \Phi A_S C$, then
  \begin{equation}
    \label{eq:intertwiner_antiunit}
    A_S h_{\vec{\delta k}} A_S^* = \cc{ h_{\hat{S}\vec{\delta k}} }.
  \end{equation}
\end{lemma}

\begin{proof}
  From equation~(\ref{eq:daul_action}) we have
  \begin{equation*}
    S \left(H(\vec{k}_0 + \vec{\delta k}) - H(\vec{k}_0) \right) S^*
    = H\left(\hat{S}(\vec{k}_0+\vec{\delta k})\right)  - H\left(\hat{S}\vec{k}_0\right)
    = H\left(\vec{k}_0 + \hat{S}\vec{\delta k}\right)  - H\left(\vec{k}_0\right),
  \end{equation*}
  where we used the fact that $\vec{k}_0$ is a fixed point of
  \eqref{eq:rot_mod_map}.  Passing to the limit, we get
  \begin{equation}
    \label{eq:deriv_conj_S}
    S \left( \delta k_1 \left.\frac{\p H}{\p k_1}\right|_{\vec{k}_0} 
      + \delta k_2 \left.\frac{\p H}{\p k_2}\right|_{\vec{k}_0} \right) S^* 
    =: S \left( D_{\vec{\delta k}}H \right) S^*
    = D_{\hat{S}\vec{\delta k}} H,
  \end{equation}
  where $D_{\vec{\delta k}} H$
  is the directional
  derivative of $H$ in the direction $\vec{\delta k}$ at the point $\vec{k}_0$.
 
  Note that $h_{\vec{\delta k}} = \Phi^* \left({D_{\vec{\delta k}} H}\right) \Phi$.
  Conjugating equation \eqref{eq:deriv_conj_S} by the matrix $\Phi$,
  we get~(\ref{eq:intertwiner_unit}) (note that $A_{S^*} = A_S^*$).
  The antiunitary case is analogous.
\end{proof}

\subsection{Application to graphene operators}
\label{sec:appl_pert}

\begin{lemma}
  \label{lemma:structure_of_Hderiv}
  Let the self-adjoint operator $H$ be $\Gamma$-periodic and invariant
  under rotation $R$.  If $H(\vec{k}^*)$ has an eigenvalue $\lambda_0$
  of multiplicity two with eigenvectors satisfying
  \begin{equation}
    \label{eq:rot_eigenfunctions}
    R f_1 = \tau f_1, \qquad R f_2 = \cc{\tau} f_2, \qquad \tau = e^{2\pi i / 3},
  \end{equation}
  the dispersion relation has the form 
  $\lambda - \lambda_0 = \pm |\alpha| |\vec{\kappa} - \vec{\kappa}_0|
  + O(\abs{\vec{\kappa}-\vec{\kappa}\sb 0}^2)$,
  with 
  \begin{equation}
    \label{eq:alpha_formula}
    \alpha =
    \left\langle f_1, \frac{\partial H}{\partial \kappa_1} f_2 \right\rangle.
  \end{equation}
\end{lemma}

\begin{remark}
  This calculation was performed for $\R^2$ Laplacian with any
  $R$-symmetric potential in \cite[Prop 4.2]{FefWei_jams12}, using
  explicit calculation of the derivatives
  ${\partial H}/{\partial \kappa_j}$.  We show that it is a direct
  corollary of Lemma~\ref{lem:intertwiner}.
\end{remark}

\begin{proof}
  We use Lemma~\ref{lem:intertwiner} with the symmetry $S=R$.
  From (\ref{eq:rot_eigenfunctions}) we obtain
  \begin{equation}
    \label{eq:AR}
    A_R = 
    \begin{pmatrix}
      \tau & 0 \\ 0 & \cc{\tau}
    \end{pmatrix}.
  \end{equation}
  Using the explicit form of the matrix $\hat{R}$ from
  \eqref{eq:hatR_def}, equation~(\ref{eq:intertwiner_unit}) can
  be written in components as
  \begin{equation}
    \label{eq:deriv_h1_cond}
    \begin{pmatrix}
      \tau & 0 \\ 0 & \cc{\tau}
    \end{pmatrix}
    h_1
    \begin{pmatrix}
      \cc{\tau} & 0 \\ 0 & \tau
    \end{pmatrix}
    = -h_1 -h_2, 
    \qquad 
    \begin{pmatrix}
      \tau & 0 \\ 0 & \cc{\tau}
    \end{pmatrix}
    h_2
    \begin{pmatrix}
      \cc{\tau} & 0 \\ 0 & \tau
    \end{pmatrix}
    = h_1.
  \end{equation}
  It is now straightforward to check that any $2\times2$ Hermitian
  matrices satisfying \eqref{eq:deriv_h1_cond} must be of the form
  \begin{equation}
    \label{eq:deriv_rot}
    h_{1} =
    \begin{pmatrix}
      0 & \beta \\
      \cc\beta & 0
    \end{pmatrix},
    \qquad
    h_{2} =
    \begin{pmatrix}
      0 & \tau\beta \\
      \cc\tau \cc\beta & 0
    \end{pmatrix}, 
    \qquad
    \mbox{where }
    \beta = \left\langle f_1, \frac{\partial H}{\partial k_1} f_2 \right\rangle.
  \end{equation}
  We now calculate the shape of the dispersion relation in the first
  order of perturbation theory using \eqref{eq:det_dispersion}.  It is
  \begin{equation}
    \label{eq:disp_conical}
    (\lambda-\lambda_0)^2 - |\beta|^2 |\delta k_1 + \tau\, \delta k_2|^2 
    = (\lambda-\lambda_0)^2 - \frac{3}{4}|\beta|^2 |\delta\kappa|^2 = 0,
  \end{equation}
  where we changed to the coordinates $\vec{\kappa} = k_1 \vec{b_1} +
  k_2\vec{b_2}$ in which the dispersion relation is the circular cone
  with no tilt.  To relate the answer to \eqref{eq:alpha_formula}, we
  observe that
  \begin{equation*}
    \frac{\partial H}{\partial \kappa_1} = \frac{\sqrt{3}}2 \left(
      \frac{\partial H}{\partial k_1} + \frac{\partial H}{\partial
        k_2}
      \right), 
  \end{equation*}
  and therefore, from \eqref{eq:deriv_rot}, $\alpha =
  \frac{\sqrt{3}}2(1+\tau) \beta$.  Since $|\alpha|^2 = \frac34
  |\beta|^2$, we get the promised answer.
\end{proof}

The cone becomes degenerate if $\alpha=0$ (this condition is
equivalent to condition (4.1) of \cite{FefWei_jams12}).  In
\cite{Gru_mn09}, $\alpha$ was shown to be non-zero for small
$\epsilon$ in $H_\varepsilon=-\Delta + \varepsilon q(\vec{x})$; by
analyticity, the cone can be degenerate only for isolated values of
the parameter $\epsilon$.  We explore this in more detail in the next
section.

\subsection{Perturbation of the pure Laplacian}
\label{sec:pert_pure}

In this section we describe in more detail the case of Laplacian on
$\R^2$ with the bounded potential $q(\vec{x})$ considered as a
perturbation, $H_\varepsilon=-\Delta + \varepsilon q(\vec{x})$.
Similar calculation appeared in \cite{Gru_mn09} and
\cite{FefWei_jams12} (see also \cite{FefLeeWei_prep14}), therefore we
concentrate on connections with the results presented above.

\begin{proof}[Proof of Theorem~\ref{thm:Laplacian}]
  When $\varepsilon=0$, the lowest eigenvalue of $H_0(\vec{k}^*)$ is triply
  degenerate.  Indeed, the function
  \begin{equation}
    \label{eq:eigenvector_of_Lapl}
    \phi(\vec{x}) := \exp\left(i \vec{\kappa}^* \cdot \vec{x}\right)
    = \exp\left(\frac{4\pi i}{3} x_2\right)
  \end{equation}
  is an eigenfunction of the Laplacian and satisfies
  \begin{equation*}
    \phi(\vec{x}+\vec{a}_1) = \tau \phi(\vec{x}), \qquad
    \phi(\vec{x}+\vec{a}_2) = \cc{\tau} \phi(\vec{x}),
  \end{equation*}
  therefore it is an eigenfunction of $H_0(\vec{k}^*)$.  Since $R$, the
  operator of rotation by $2\pi/3$, commutes with $H_0(\vec{k}^*)$, the
  functions 
  \begin{equation}
    \label{eq:Rphi}
    R \phi = \exp\left(\frac{4\pi i}{3}
      \left(-\frac{\sqrt3}{2}x_1-\frac12 x_2\right)\right),
    \qquad
    R^2 \phi = \exp\left(\frac{4\pi i}{3}
      \left(\frac{\sqrt3}{2}x_1-\frac12 x_2\right)\right),  
  \end{equation}
  are also eigenfunctions.  It can be
  verified directly that they are orthogonal.  Their combinations
  \begin{equation}
    \label{eq:Q_eigenfunction}
    \psi_j(\vec{x}) := \frac{1}{3}\left( \phi(\vec{x}) + \tau^j R \phi(\vec{x}) +
      \cc{\tau}^j R^2 \phi(\vec{x})\right) =: P_j\phi, \quad j=0,1,2,
  \end{equation}
  are simple eigenfunctions of the operator $H_0(\vec{k}^*)$
  restricted to $\spaceH_j(\vec{k}^*)$ for $j=0,1,2$ correspondingly.

  We now need to show that the eigenvalues of $H$ in
  $\spaceH_0(\vec{k}^*)$ and $H$ in $\spaceH_1(\vec{k}^*)$ (or
  $\spaceH_2(\vec{k}^*)$) will separate for non-zero $\epsilon$
as long as \eqref{eq:cond_blah} is satisfied.  In
  the first order perturbation theory, the condition for separation is
  \begin{equation}
    \label{eq:sep_cond}
    \frac{\langle\psi_0, q(\vec{x})\psi_0\rangle}{\langle\psi_0,
      \psi_0\rangle} 
    \neq \frac{\langle\psi_1, q(\vec{x}) \psi_1\rangle}{\langle\psi_1,
      \psi_1\rangle}
    = \frac{\langle\psi_2, q(\vec{x}) \psi_2\rangle}{\langle\psi_2,
      \psi_2\rangle}
  \end{equation}
  where the scalar products are taken in $L^2(\Omega_R)$.  Since
  $\|\psi_0\|=\|\psi_1\| = \|\psi_2\|$ and $\tau+\cc{\tau}=-1$,
  condition (\ref{eq:sep_cond}) is equivalent to
  \begin{equation}
    \label{eq:sep_cond2}
    \left\langle P_0\phi, q(\vec{x})P_0\phi\right\rangle_{L^2(\Omega_H)} 
    + \tau \left\langle P_1\phi, q(\vec{x})P_1\phi\right\rangle_{L^2(\Omega_H)}
    + \cc\tau \left\langle P_2\phi, q(\vec{x})P_2\phi\right\rangle_{L^2(\Omega_H)}
    \neq 0.
  \end{equation}
  Using that $P_j$ are projectors which commute with multiplication by
  the $R$-invariant function $q(x)$, we reduce the left-hand side to 
  \begin{equation}
    \label{eq:sep_cond3}
    \left\langle (P_0+\cc\tau P_1 + \tau P_2)\phi, q(\vec{x})\phi\right\rangle 
    = \langle R\phi, q\phi \rangle = \langle R^2\phi, q R\phi \rangle
    = \int_{\Omega_H} e^{\frac{4\pi i}{\sqrt3} x_1} q(\vec{x})\, d\vec{x},
  \end{equation}
in agreement with \eqref{eq:cond_blah}.
 
  Two more facts are now needed to establish existence of non-degenerate
  conical points for almost all values of $\varepsilon>0$.
  \begin{enumerate}
  \item
    The parameter
    $\alpha$ describing the opening angle of the cone, see
    equations~(\ref{eq:alpha_formula}) and (\ref{eq:disp_conical}), is
    analytic as a function of $\varepsilon$.
  \item
    $\alpha$ is nonzero when $\varepsilon=0$.
  \end{enumerate}
  
  Analyticity of $\alpha=\alpha(\varepsilon)$ follows from the
  analyticity of the eigenfunction corresponding to a simple
  eigenvalue of the self-adjoint operator $H(\varepsilon)$ on the
  fixed space $\spaceH_1(\vec{k}^*)$ as a function of one parameter;
  this is a consequence of the results of Rellich and Kato, see
  \cite[Sec.~VII.3]{Kato} and \cite{Rel_ma39}.  The corresponding
  eigenfunction $f_1$ is also analytic and so is $f_2$.  The
  derivative $\partial H / \partial k_1 =\partial H_0/\partial k_1$
  does not depend on $\varepsilon$,
  therefore $\alpha$ defined by (\ref{eq:alpha_formula}) is analytic.

  Finally, we calculate the value of $\alpha(0)\neq0$ explicitly.  By
  the standard gauge transformation technique,
  $D_{\vec{\kappa}} H = -2i\vec{\kappa}\cdot \nabla$.  Therefore, using
  (\ref{eq:Q_eigenfunction}) and orthogonality of $\phi$, $R\phi$ and
  $R^2\phi$, we get
  \begin{align}
    \label{eq:alpha_calc}
    \alpha &= \frac{1}{\|\psi_1\| \|\psi_2\|}
    \left\langle \psi_1, \frac{\partial H}{\partial \kappa_1}
      \psi_2 \right\rangle_{L^2(\Omega_H)}
    = \frac{-2i}{\|\psi_1\| \|\psi_2\|} \left\langle \psi_1, \frac{\partial}{\partial x_1}
      \psi_2 \right\rangle
\nonumber
\\
    &= \frac{-2i}{3\|\phi\|^2} \left\langle \phi + \tau R\phi + \tau^2 R^2\phi, 
      \frac{\partial}{\partial x_1}
      \left(\phi+\tau^2 R\phi+\tau R^2\phi\right) \right\rangle
\nonumber
\\
    &= \frac{-2i}{3\|\phi\|^2} \left\langle \phi + \tau R\phi + \tau^2 R^2\phi, 
      -\frac{2\pi i}{\sqrt{3}} \tau^2 R\phi 
      + \frac{2\pi i}{\sqrt{3}}\tau R^2\phi \right\rangle
    = \frac{4\pi}{3\sqrt{3}}(-\tau + \tau^2) = -\frac{4\pi i}{3}.
  \end{align}
\end{proof}

\begin{remark}
  The assumption $q\in L^\infty(\R^2)$ could be relaxed.
The discreteness of spectrum and analyticity of eigenvalues
of $H(\vec{k})$ (as functions of quasi-momenta $\vec{k}$)
for periodic potentials $q\in L\sb{\mathrm{loc}}^{1+\epsilon}(\R^2)$,
$\epsilon>0$,
follows from the argument in \cite[Theorem 3.1]{AvrSim_ap78}
(where the corresponding result is obtained for the three-dimensional case
when $q\in L\sb{\mathrm{loc}}^{3/2}(\R^3)$).
Under this assumption,
the potential $q$ is a relatively bounded perturbation
with relative bound zero
and $H(\vec{k})$ is analytic family of type B
in the sense of Kato~\cite{Kato}.
\end{remark}

\begin{remark}
  Consider a potential $q(\vec{x})$ which is $R$-invariant, but may
  not have
  $V$ or $F$ symmetry.  It can be shown that the first order
  perturbation condition for the eigenvalues of $Q_1$ and $Q_2$ to
  \emph{not separate} is precisely that the right hand side of
  equation \eqref{eq:sep_cond3} is real.  The latter is of course
  satisfied if $q(\vec{x})$ does have $V$ or $F$ symmetry.
\end{remark}

\begin{example}
  To continue with Example~\ref{exm:graph_structure}, it is
  interesting to investigate\footnote{This question was asked by
    P.~Kuchment.} what happens when the parameter $r$ is equal to 1.
  At the special points $\pm\vec{k}^*$ there are now triple
  degeneracies as the spectrum of $Q_0$ coincides at this point with
  the spectra of $Q_1$ and $Q_2$.  As a consequence there is no
  conical point there.  Instead, the lower 3 sheets of the dispersion
  relation develop singularities along curves and touch each
  other to form an intricate picture, Figure~\ref{fig:dispersion3deg}.
  The picture can be resolved as three analytic surfaces crossing each
  other.  Similar shape is assumed by the upper 3 sheets.

  \begin{figure}[t]
    \includegraphics[scale=0.35]{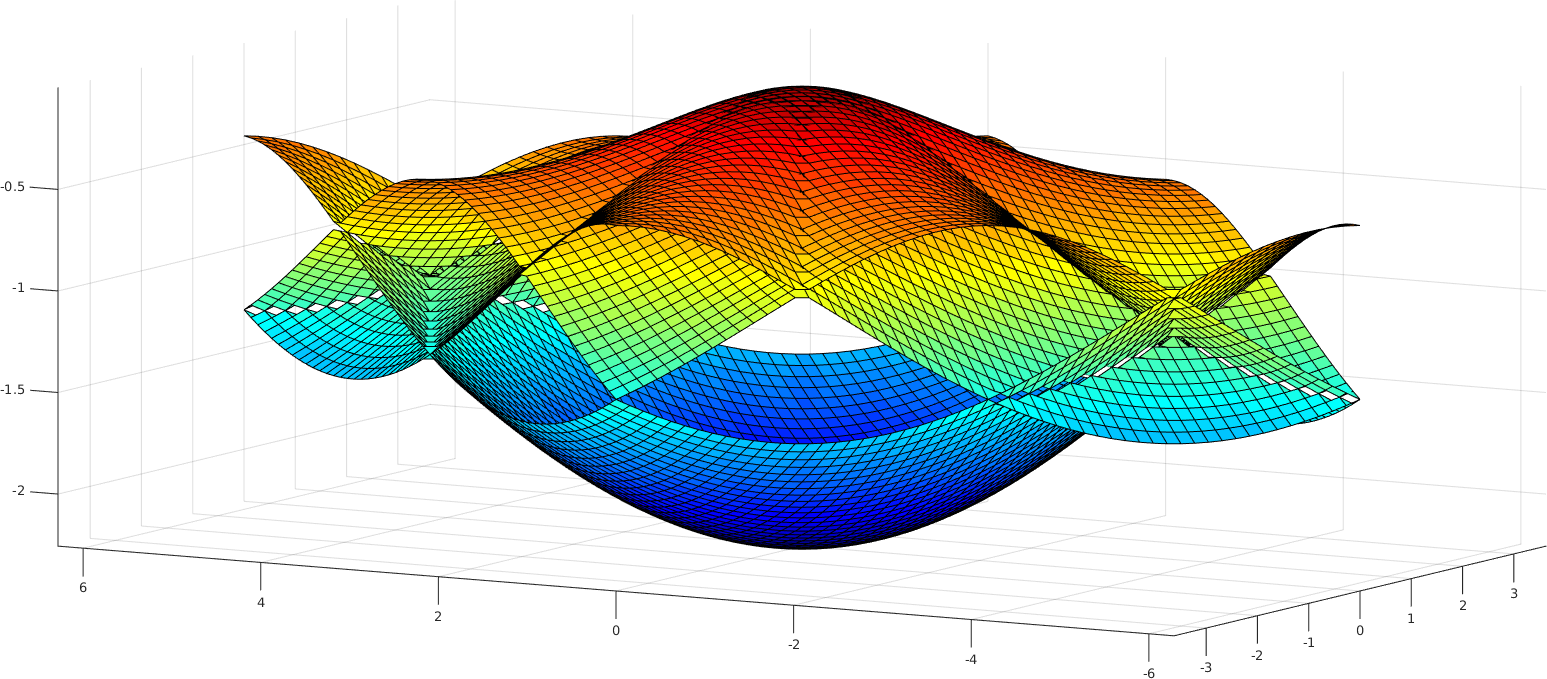}
    \caption{The lower three sheets of the dispersion relation for the
      graph in Example~\ref{exm:graph_structure} with the parameter
      $r=1$.}
    \label{fig:dispersion3deg}
  \end{figure}  

  The reason for such a complicated picture is that the system now has
  more symmetry and the three sheets can be obtained by (1)
  considering the smaller fundamental domain, (2) cutting up its
  dispersion relation and folding it back to Brillouin zone chosen in
  Figure~\ref{fig:dispersion3deg}.  This is analogous to the situation
  with $H_0 = -\Delta$ above which has more symmetry than the
  hexagonal lattice.  It also illustrates the observation of
  \cite{FefWei_jams12} that the cones may degenerate at isolated
  values of a parameter ($r$, in the present example).
\end{example}

\section{Degeneracy at $\vec{k} = \vec{0}$}
\label{sec:point0}

The third fixed point of the rotation $\hat R$ in the momentum space
(see Lemma~\ref{lemma:sym_disp}) also leads to degeneracies in the
spectrum.  They are present even if both inversion and reflection
symmetries are broken: rotation and complex conjugation are sufficient
to retain degeneracies.  However, the local structure of the
dispersion relation is a \emph{degenerate} cone, see
Fig.~\ref{fig:dispersion} for an example.

\begin{lemma}
  \label{lemma:point0}
  Let the self-adjoint operator $H$ be $\Gamma$-periodic and invariant
  under rotation $R$.  The space $\spaceH(\vec{0})$ splits into the
  orthogonal sum
  \begin{equation}
    \label{eq:split_spaceH_star3}
    \spaceH(\vec{0}) = \spaceH_0(\vec{0}) \oplus
    \spaceH_\perp(\vec{0}),
  \end{equation}
  where
  $\spaceH_0(\vec{0}) = \{\psi \in \spaceH(\vec{0}) : R\psi =
  \psi\}$. This splitting is $H$-invariant.  

  If $H$ is also invariant with respect to complex conjugation, then
  all eigenvalues of the operator $H$ restricted to
  $\spaceH_\perp(\vec{k}^*)$ have even multiplicity.  Moreover, each
  eigenspace has an orthonormal basis $\{f^1_n, f^2_n\}$, such that
  \begin{equation}
    \label{eq:spec_basis_RC}
    R f^1_n = \tau f^1_n, \qquad 
    R f^2_n = \cc{\tau}f^2_n, \qquad
    \mbox{and} \qquad
    f^2_n = \cc{f^1_n}.     
  \end{equation}

  If $\lambda = \lambda_0$ is an eigenvalue of multiplicity two, then
  the dispersion relation is locally flat at $\vec{k} = \vec{0}$:
  \begin{equation}
    \label{eq:flat_disp}
    \lambda - \lambda_0 = O(|\vec{k}|^2).
  \end{equation}
\end{lemma}

\begin{remark}
  The eigenvalue $\lambda_1(\vec{0})$ is always non-degenerate,
  therefore first and second bands cannot touch at $\vec{k} = \vec{0}$. 
\end{remark}

\begin{proof}
  The proof of the first part is identical to the proof of
  Lemma~\ref{lemma:mult} in the case of symmetries $R$ and $\Vbar$.

  To prove the estimate~\eqref{eq:flat_disp}, we use the special basis
  satisfying \eqref{eq:spec_basis_RC}.  The proof of
  Lemma~\ref{lemma:structure_of_Hderiv} still applies so the matrices
  $h_1$ and $h_2$ have the form given by \eqref{eq:deriv_rot}.
  Applying Lemma~\ref{lem:intertwiner} to the complex conjugation $C$
  as an antiunitary symmetry of
  $H(\vec{k})$ at $\vec{k}=\vec{0}$, and using
  \eqref{eq:spec_basis_RC}, we get
  \begin{equation}
    \label{eq:deriv_matrix_conj_conj}
    \begin{pmatrix}
      0 & 1\\ 1 & 0
    \end{pmatrix} h_{\vec{k}}
    \begin{pmatrix}
      0 & 1\\ 1 & 0
    \end{pmatrix}
    = \cc{ h_{-\vec{k}} }.
  \end{equation}
  This is consistent with \eqref{eq:deriv_rot} only if $h_{\vec{k}}\equiv0$.
  Then, according to (\ref{eq:det_dispersion}), $\lambda-\lambda_0 =
  O(|\vec{k}|^2)$
  yielding the conclusion.
\end{proof}

\begin{remark}
  More generally, one can get the following result.
  Suppose the operator $H(\vec{k}) = H(k_1,k_2)$ has the following symmetry at
  the point $\vec{k}_0$:
  \begin{equation*}
    H\left(\vec{k}_0 - \vec{k}\right)
    =
    \overline{H\left(\vec{k}_0 + \vec{k}\right)}
    := C H\left(\vec{k}_0 + \vec{k}\right) C^{-1}.
  \end{equation*}
  If $\lambda_0$ is an eigenvalue of $H(\vec{k}_0)$ of multiplicity 2,
  it cannot be a nondegenerate conical point.  In the leading order,
  it must have the form of two intersecting planes of which
  \eqref{eq:flat_disp} is a degenerate example.
\end{remark}

\begin{figure}[t]
  \centering
  \includegraphics[scale=0.6]{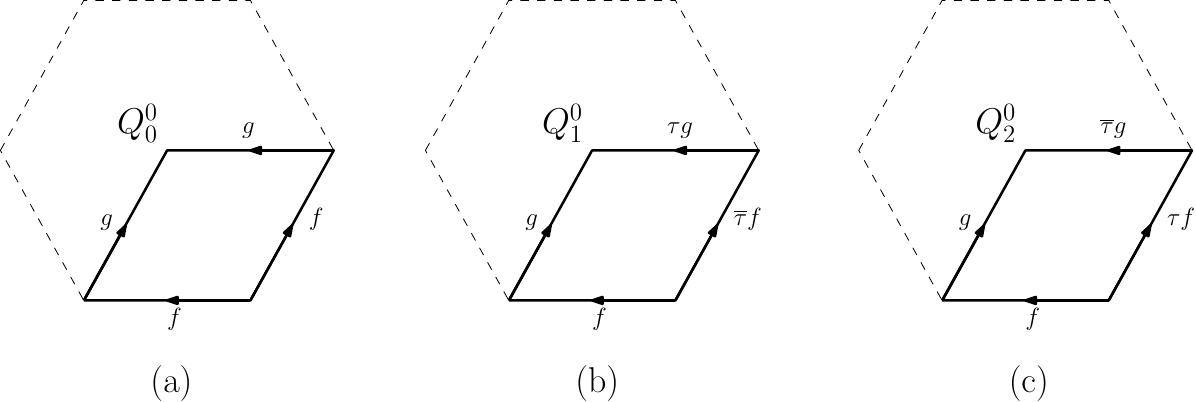}
  \caption{The operators $Q^0_j$ that together give the spectrum of
    $H(\vec{0})$.} 
\label{fig:operatorsQ0}
\end{figure}

\begin{remark}
  We can also obtain a more detailed splitting of $H(\vec0)$ similar
  to Section~\ref{sec:full_split}, into a sum of three operators,
  $Q^0_j$, $j=0,1,2$, whose boundary conditions are described in Figure~\ref{fig:operatorsQ0}.
\end{remark}

\begin{example}
  \label{exm:graph_ex_0}
  Revisiting Example~\ref{exm:graph_structure} and calculating the
  eigenvalues of $H(\vec{0})$ numerically, we get
  \begin{equation*}
    -3.8598, \quad {-0.9937}, \quad {-0.9937}, \quad 2.7257, \quad 2.7257,
    \quad 5.5918.
  \end{equation*}
  The corresponding operator $Q^0_1$ in this case can be shown to be
 \begin{equation*}
    Q^0_1 =
    \begin{pmatrix}
      q_1 & -1 - \cc{\tau} - r \tau \\
      -1 - \tau - r\cc\tau & q_2
    \end{pmatrix}.
  \end{equation*}
  with eigenvalues $-0.9937$ and $2.7257$.

  Interestingly, in the case of Example~\ref{exm:two_atom}, the graph
  structure is not rich enough to support the operators $Q^0_1$ or $Q^0_2$: it can
  be shown that in this case $\spaceH_\perp(\vec{0}) = \emptyset$.
\end{example}

In Appendix~\ref{sec:pert_k0} we give a brief account of the case of
pure Laplacian on $\R^2$ at the quasi-momentum point $\vec{k} = 0$.  It
is largely parallel to Section~\ref{sec:pert_pure}, but requires 
delving deeper into 
representation theory of Section~\ref{sec:representations}.


\section{Persistence of conical points}
\label{sec:persistence}

We are now going to study the fate of the conical point when
the rotational symmetry is broken by a small perturbation.  We will
consider two cases:
when the perturbation retains the
conjugate inversion symmetry $\Vbar$
and when it retains the reflection symmetry $F$
(all other symmetries may or may not be
broken).  In both cases the conical point survives.  Moreover, in the
second case we are able to restrict the location of the surviving
point to a line in $\vec{k}$ space.  Of course, if the perturbation
retains both symmetries, $\Vbar$ and $F$, the stronger second result still
applies.

\subsection{Keeping $\Vbar$ symmetry: Berry phase}
\label{sec:keepingV}

Let us consider the weakly broken $R$ symmetry: we add to $H$ a
perturbation which is $\Vbar$-invariant but not $R$-invariant.  The $F$
symmetry may or may not be preserved.

The tool for proving Theorem~\ref{thm:persistence} in this case is the
``Berry phase'' \cite{Ber_prsla84,Sim_prl83} (also known as
``Pancharatnam--Berry phase'' or ``geometric phase''), of which we
first give an informal description.  Consider choosing a closed
contour in the parameter space and tracking certain eigenvalue along
this contour.  The eigenvalue changes as we move along the contour,
but we assume it remains simple.  Now we choose the corresponding
normalized eigenvector at every point of the contour.  The eigenvector
is defined up to a phase, and we choose it ``in the most continuous
fashion''.  Once we completed the loop, the final eigenvector must
equal the initial eigenvector up to a phase factor $e^{i \phi}$.  The
phase $\phi$ we call the Berry phase.  The fact that it might be
different from zero (modulo $2\pi$) in the simplest form of real
operator $H$ and a contour encircling a conical point has been known
for a while, see \cite{HerLon_dfs63} and \cite[Appendix
10.B]{Arnold_mat_methods}.

We now argue that the Berry phase of the operator
$H_\epsilon(\vec{k})$
can only take values $0$ or $\pi$ (modulo $2\pi$).
Because of the symmetry of the
perturbation $W$, the perturbed operator $H_\epsilon(\vec{k})$ will
retain the symmetry $\Vbar$ for all $\vec{k}$.  The operator
$\Vbar$ is an antiunitary involution, i.e.
\begin{equation}
  \label{eq:def_antiunitary_inv}
  \Vbar(\alpha v) = \cc{\alpha} (\Vbar v),
  \qquad\Vbar^2 = 1, 
  \qquad
\langle\Vbar v, \Vbar u\rangle
= \langle u,v\rangle.
\end{equation}
If $\psi$ is a simple eigenfunction of $H(\vec{k})$, then, after
multiplication by a suitable phase,
\begin{equation}
  \label{eq:rep_Vbar}
  \Vbar\psi := \overline{\psi(-\vec{x})} = \psi.
\end{equation}
Indeed, because $\Vbar$ commutes with the operator
$H(\vec{k})$, $\overline{\psi(-\vec{x})}$ is an eigenvector with the
same eigenvalue and thus equal to $e^{i\theta} \psi$ for some
$\theta$.  Multiplying $\psi$ by $e^{i\theta/2}$ makes it satisfy
equation~\eqref{eq:rep_Vbar}.  

Condition~\eqref{eq:rep_Vbar} gives us a canonical way to choose the
overall phase of the eigenvector, up to a sign.\footnote{This choice of the
eigenvector along a curve in the parameter space
defines a parallel section of the line bundle of the eigenspaces.}
Now consider a closed path in the parameter $\vec{k}$ space.  The
phase acquired by a parallel section of the eigenspaces
(the formal
definition of the Berry phase) is restricted by condition
\eqref{eq:rep_Vbar}: the factor must be either $+1$
or $-1$, so the phase is either $0$ or $\pi$ modulo $2\pi$.

On the other hand, the phase must change continuously upon a
continuous deformation of the contour.  Therefore, if the contour is
homotopically equivalent to a point (i.e.\ encloses no parameter values
where the eigenvalue becomes multiple), the phase must be equal to
zero.  But if the contour encloses a conical point, the phase is equal
to $\pi$ modulo $2\pi$.

\begin{lemma}
  \label{lem:minus_one}
  Let the self-adjoint operator $H(\vec{k})$, which analytically
  depends on the two parameters $\vec{k} = (k_1,k_2)$, have a
  nondegenerate conical point at $(0,0)$.  Let $H(\vec{k})$ commute
  with an antiunitary involution $\Vbar$.  Then the Berry phase
  acquired on a contour enclosing the singularity $(0,0)$ is $\pi$.
\end{lemma}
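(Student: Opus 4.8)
The plan is to make rigorous the informal argument sketched just before the lemma statement, namely that the antiunitary involution $\cc{V}$ gives a canonical choice of the eigenvector phase up to a global sign, and that this constrains the Berry holonomy to lie in $\{+1,-1\}$, while the nondegenerate conical point forces the value $-1$.

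\begin{proof}
  Since $(0,0)$ is a nondegenerate conical point, there is a punctured
  neighborhood $U\setminus\{(0,0)\}$ in which the eigenvalue in
  question stays simple.  Pick a smooth closed contour
  $\gamma:[0,1]\to U\setminus\{(0,0)\}$ with $\gamma(0)=\gamma(1)$
  encircling the origin once.  Over $\gamma$ choose a continuous
  family $\psi(t)$ of normalized eigenvectors of $H(\gamma(t))$; such a
  choice exists locally by analytic perturbation theory and can be
  patched together to a continuous choice over the whole contour, the
  only ambiguity being a global phase.  The Berry phase is the number
  $e^{i\phi}\in U(1)$ defined by $\psi(1)=e^{i\phi}\psi(0)$ for the
  \emph{parallel} choice (the one with $\langle\psi(t),\dot\psi(t)\rangle=0$),
  and it is independent of the starting point and of the local
  representative.

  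Now use the symmetry.  Because $H(\vec k)$ commutes with $\cc{V}$ and
  the eigenvalue is simple along $\gamma$, $\cc{V}\psi(t)$ is again a
  normalized eigenvector, so $\cc{V}\psi(t)=c(t)\psi(t)$ for some
  $c(t)\in U(1)$ depending continuously on $t$.  Applying $\cc{V}$ once
  more and using $\cc{V}^2=1$ together with antilinearity,
  $\psi(t)=\cc{V}(c(t)\psi(t))=\cc{c(t)}\,\cc{V}\psi(t)=\cc{c(t)}c(t)\psi(t)=|c(t)|^2\psi(t)$,
  which is automatic; the real constraint comes from observing that we
  may \emph{renormalize}: replacing $\psi(t)$ by
  $e^{i\theta(t)/2}\psi(t)$ with $e^{i\theta(t)}=c(t)$ (choosing a
  continuous branch of the square root, which is possible on the
  connected curve once a branch is fixed at $t=0$) produces a family,
  still continuous, with $\cc{V}\psi(t)=\psi(t)$; here I use that
  $\cc{V}(e^{i\theta/2}\psi)=e^{-i\theta/2}\cc{V}\psi=e^{-i\theta/2}e^{i\theta}\psi=e^{i\theta/2}\psi$.
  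Thus there is a continuous $\cc{V}$-invariant normalized eigenvector
  field along $\gamma$; denote it again by $\psi(t)$.

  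For such a $\cc{V}$-invariant field the inner product
  $\langle\psi(t),\psi(s)\rangle$ is real for all $t,s$: indeed
  $\langle\psi(t),\psi(s)\rangle=\langle\cc{V}\psi(t),\cc{V}\psi(s)\rangle=\cc{\langle\psi(s),\psi(t)\rangle}=\langle\psi(t),\psi(s)\rangle$,
  using the third identity in \eqref{eq:def_antiunitary_inv}.  In
  particular $\langle\psi(1),\psi(0)\rangle\in\R$, and since
  $\psi(1)=e^{i\phi_0}\psi(0)$ for some $\phi_0$ (both are unit
  eigenvectors for the same simple eigenvalue, the contour being
  closed) we get $e^{i\phi_0}\in\{+1,-1\}$, i.e.\ the holonomy of this
  particular connection is $\pm1$.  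This holonomy differs from the
  Berry (parallel-transport) holonomy only by the winding of the phase
  correction relating the two sections, but both are computed from the
  same line bundle with the same connection along a closed loop, so
  they coincide modulo $2\pi$; hence the Berry phase is $0$ or $\pi$.

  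It remains to show the value is $\pi$ when $\gamma$ encircles the
  conical point.  This is a standard local computation at a
  nondegenerate cone: by Theorem~\ref{thm:structure_of_Hderiv} (or by
  the generic normal form \cite[Appendix 10.B]{Arnold_mat_methods})
  the leading-order operator on the two-dimensional degenerate
  eigenspace, written in a suitable real basis, is
  $k_1\sigma_1+k_2\sigma_3$ plus higher order, whose normalized lower
  eigenvector is $(\cos(\varphi/2),-\sin(\varphi/2))$ in polar
  coordinates $\vec k=r(\cos\varphi,\sin\varphi)$; transporting
  $\varphi$ from $0$ to $2\pi$ sends this vector to its negative, so
  the holonomy is $-1$.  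Since the holonomy is a homotopy invariant of
  the loop in $U\setminus\{(0,0)\}$, the same value $-1$, i.e.\ Berry
  phase $\pi$, holds for any loop encircling $(0,0)$ once.
\end{proof}

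The main obstacle is the bookkeeping in the middle step: one must check
carefully that the canonical $\cc{V}$-invariant choice of eigenvector
can be made \emph{continuously} along the whole loop (the square-root
branch issue) and that the resulting $\pm1$ holonomy is genuinely the
Berry holonomy and not an artifact of the non-parallel gauge — this is
where the reality of $\langle\psi(t),\psi(s)\rangle$ and the homotopy
invariance of the holonomy do the work. The final identification of the
sign with $-1$ is routine given Theorem~\ref{thm:structure_of_Hderiv}.
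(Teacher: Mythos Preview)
Your argument is essentially correct, but a few points deserve tightening, and your route to the value $\pi$ differs from the paper's.  First, the chain showing $\langle\psi(t),\psi(s)\rangle\in\R$ has a slip: by \eqref{eq:def_antiunitary_inv} one gets $\langle\cc{V}\psi(t),\cc{V}\psi(s)\rangle=\langle\psi(s),\psi(t)\rangle$, not its conjugate; the conclusion survives.  Second, your claim that the holonomy of the $\cc{V}$-invariant section equals the Berry holonomy needs the observation that this section is automatically \emph{parallel}: differentiating $\langle\psi(t),\psi(s)\rangle\in\R$ in $s$ at $s=t$, combined with normalization, forces $\langle\psi,\dot\psi\rangle=0$.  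Your sentence about the two holonomies ``coinciding modulo $2\pi$'' is not the right justification.  Third, the appeal to Theorem~\ref{thm:structure_of_Hderiv} is misplaced: that result uses the rotation symmetry $R$, which is not assumed in Lemma~\ref{lem:minus_one}.  The correct reduction is via $\cc{V}$ itself --- choosing a $\cc{V}$-real basis of the degenerate eigenspace (Lemma~\ref{lem:special_phase}, part~(1)) makes the $2\times2$ matrices $h_1,h_2$ real symmetric, and nondegeneracy of the cone then gives the normal form $k_1\sigma_1+k_2\sigma_3$ after a linear change in $\vec{k}$.

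For the identification of the sign as $-1$, your approach --- reducing to a $2\times2$ normal form and computing the eigenvector explicitly --- is precisely the Herzberg--Longuet-Higgins route the paper's Remark says it is \emph{avoiding}.  The paper's proof instead works with the limiting eigenvectors $\psi_0^\pm(\theta)=\lim_{r\to0}\psi^\pm(r,\theta)$, uses the cone geometry to see that a vertical plane section swaps the two branches, so $\psi_0^+(\theta+\pi)=\pm\psi_0^-(\theta)$, expands in a basis of the form $\{\phi,\cc{V}\phi\}$ (Lemma~\ref{lem:special_phase}, part~(2)), and combines the $\cc{V}$-invariance condition with orthogonality of $\psi_0^+$ and $\psi_0^-$ to obtain $\alpha^+(\theta+2\pi)=-\alpha^+(\theta)$ without ever writing down an eigenvector of a Pauli matrix.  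Your route is shorter and more familiar; the paper's is coordinate-free and makes the half-period branch exchange explicit.
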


\begin{remark}
  This result for a real-valued operator $H$ can be traced back at
  least to Herzberg and Longuet-Higgins \cite{HerLon_dfs63}.  Their
  proof is based on reducing the question using perturbation theory to
  a question about $2\times2$ matrices and computing the eigenvectors
  explicitly.  A more general formula is derived in
  \cite[Sec.~3]{Ber_prsla84}, from which Lemma~\ref{lem:minus_one}
  follows.  In Appendix~\ref{sec:berry} we include an alternative
  derivation which avoids computing anything explicitly, opting
  instead for a more geometric explanation, which has interesting
  similarities to considerations of Section~\ref{sec:keepingF}.
\end{remark}

  From this we immediately conclude that an isolated non-degenerate
conical point cannot disappear under a perturbation which preserves
the above symmetry.  

\begin{proof}[Proof of Theorem~\ref{thm:persistence} with $\Vbar$ symmetry]
  Surround the point with a small contour $\gamma$, such that inside
  this contour the eigenvalue $\lambda_{-}(\vec{k})$ of
  $H_{\epsilon=0}(\vec{k})$, see (\ref{eq:nondeg_splitting}), is
  simple except at $\vec{k}^*$.  Then on contour $\gamma$ the Berry
  phase of the corresponding eigenfunction must be $\pi$.  

  For small values of $\epsilon$, the eigenvalue on the contour
  $\gamma$ remains simple (as a continuous function on a compact set).
  Therefore, the phase must change continuously, so it must remain
  constant.  Finally, if there were no multiplicity of
  $\lambda_{-,\epsilon}(\vec{k})$ inside the contour, the Berry phase
  would be $0$.  The multiplicity gives rise to a nondegenerate
  conical point by continuity.
\end{proof}

\subsection{Keeping $F$ symmetry: parity exchange}
\label{sec:keepingF}

Let us now consider the weakly broken $R$ symmetry:
we add to $H$ a perturbation which is $F$-invariant
but not $R$-invariant.
The $V$ symmetry may or may not be preserved.

\begin{proof}[Proof of Theorem~\ref{thm:persistence} with $F$ symmetry]
  As explained in Section~\ref{sec:reduced_symm}, $F$ remains a symmetry
  of the operator $H(\vec{k})$ when the quasi-momenta $\vec{k}$ satisfy
  $\omega_2 = \cc{\omega_1}$ or, equivalently, $k_2 = -k_1$ modulo
  $2\pi$.
  
  Since the subgroup generated by $F$ has two representations, the
  space $\spaceH(\vec{k})$ decomposes into two orthogonal subspaces,
  even and odd, defined by
  \begin{align}
    \label{eq:even_odd_F_def}
    &\spaceH_{F}\sp{+} = \{\psi \in \spaceH(\vec{k}) : F \psi = \psi\} 
    & &\mbox{``even''},\\
    &\spaceH_{F}\sp{-} = \{\psi \in \spaceH(\vec{k}) : F \psi = -\psi\} 
    & &\mbox{``odd''}.
  \end{align}
  All simple eigenvectors of
$H\sb\epsilon(\vec{k})$
  on the symmetry line belong
  to one or the other subspace.  Multiple eigenspaces admit a basis
  consisting of vectors, each of which is either odd or even.
  
  Now suppose we are at the special symmetry point $\vec{k}^*$ in the
  presence of rotational symmetry $R$ (i.e.\ $\epsilon=0$).  At the
  conical point we have a doubly degenerate eigenvalue with orthogonal
  eigenvectors which are mapped into each other by the transformation
  $F$ (see Lemma~\ref{lemma:mult}) and therefore the sum of these
  eigenvectors is even and the difference is odd with respect to $F$.
  
  \begin{figure}[t]
    \centering
    \includegraphics{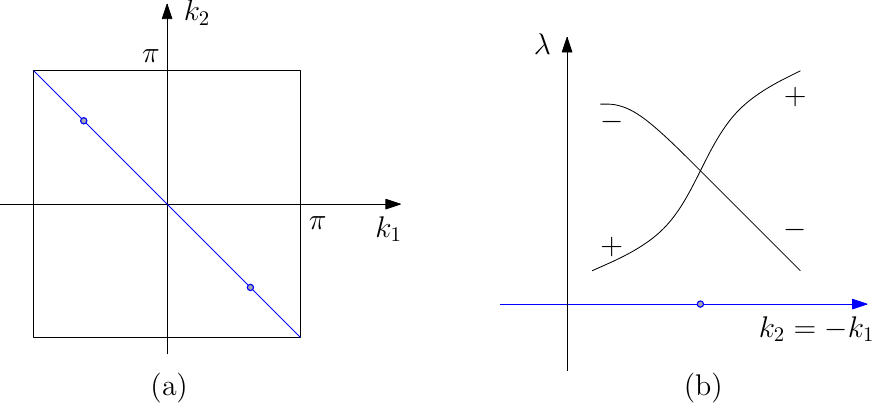}
    \caption{The line in the Brillouin zone where the symmetry $F$ is
      preserved (a).  The form of the dispersion relation along the
      symmetry line (b).}
    \label{fig:flip_symm_line}
  \end{figure}
  
  Now consider the restrictions of the operator
$H\sb\epsilon$ with $\epsilon=0$
  onto the two
  subspaces $\spaceH_{F}\sp{+}$ and $\spaceH_{F}\sp{-}$.  The above
  consideration shows that at the special point each restriction has a
  simple eigenvalue.  As we go along the line $k_2=-k_1$, the
  eigenvalue of each restriction is an analytic function.  These
  functions have an intersection at the point $k_1 =-k_2= 2\pi/3$.
  Since
  the two functions form a section of a non-degenerate cone, the
  intersection is transversal, see Fig.~\ref{fig:flip_symm_line}(b).
  Such intersection is stable under perturbation, and therefore, when
  we consider small $\epsilon \neq 0$ (keeping the symmetry $F$), the
  intersection survives.  Moreover, we know it remains on the line
  $k_2=-k_1$ and the only way it can disappear is by colliding
  with another degenerate eigenvalue on this line.
  
  The intersection
  corresponds to a degenerate eigenvalue of the operator
$H\sb\epsilon(k)$
  which, for small perturbations of the original potential, must still
  be a non-degenerate conical point.
\end{proof}

\subsection{Destroying all symmetries}
\label{sec:destroy}

When a perturbation breaks all of the symmetries $R$, $V$ and $F$, the
conical point normally separates into two surfaces, locally a
two-sheet hyperboloid.  This was discussed in detail in \cite[Remark
9.2]{FefWei_jams12}.  We merely remark here that the tips of the
sheets of the hyperboloid give rise to the edges of the band
spectrum.  This provides an example for the band edges coming from
a point in the bulk of the Brillouin zone, with no additional
symmetries (since they have been broken), a subject first addressed on
the mathematical level in \cite{HarKucSobWin_jpa07,ExnKucWin_jpa10}.

\appendix

\section{Perturbation of pure Laplacian and degeneracy at $\vec{k}=0$}
\label{sec:pert_k0}

In this section we briefly outline the situation at the quasi-momentum
point $\vec{k}=0$ when the operator is $H_0 = -\Delta$.  This should
be compared with the discussion of Section~\ref{sec:pert_pure}.

The lowest eigenvalue of $H_0(0)$ is zero, its only eigenfunction is
the constant function.  The next eigenvalue is six-fold degenerate.
The eigenfunctions are constructed out of the base function
\begin{equation}
  \label{eq:base_eigenfunction0}
  \phi(\vec{x}) := \exp
\big(2\pi i (\vec{b}_1 + \vec{b}_2) \cdot \vec{x}\big)
  = \exp\left(\frac{4\pi i}{\sqrt{3}} x_1\right),
\end{equation}
by rotations.  The symmetries of this problem are the rotation $R$,
inversion $V$, reflection $F$, and complex conjugation $C$.  The group
generated by $R$ and $V$ is the abelian group of rotations by
$2\pi/6$, we denote this rotation by $R_6$.  Then the six orthogonal
eigenvectors are
\begin{equation}
  \label{eq:six_psi}
  \psi_j(\vec{x}) := \sum_{k=0}^6 \sigma^{jk} R_6^k \phi(\vec{x}),
\end{equation}
where $\sigma = \exp(2\pi i /6)$ is the principal 6-th root of
unity.

The six-fold degenerate eigenspace can be decomposed into four
subspaces which correspond to the irreducible representations of the
group of symmetries.  Namely, $\xi =\psi_0(\vec{x})$ satisfies
\begin{equation*}
  R_6\xi = \xi, \qquad F\xi = \xi, \qquad C\xi = \xi,
\end{equation*}
eigenfunctions $\xi = \psi_1(\vec{x})$ and $\eta = -\psi_5(\vec{x})$
satisfy
\begin{equation*}
  R_6
  \begin{pmatrix}
    \xi \\ \eta
  \end{pmatrix}
  =
  \begin{pmatrix}
    \sigma \eta \\ \sigma^5 \xi
  \end{pmatrix},
  \qquad
  F
  \begin{pmatrix}
    \xi \\ \eta
  \end{pmatrix}
  =
  \begin{pmatrix}
    \eta \\ \xi
  \end{pmatrix},
  \qquad
  C
  \begin{pmatrix}
    \xi \\ \eta
  \end{pmatrix}
  =
  \begin{pmatrix}
    \eta \\ \xi
  \end{pmatrix};
\end{equation*}
eigenfunctions $\xi = \psi_2(\vec{x})$ and $\eta = \psi_4(\vec{x})$
satisfy
\begin{equation*}
  R_6
  \begin{pmatrix}
    \xi \\ \eta
  \end{pmatrix}
  =
  \begin{pmatrix}
    \sigma^2 \eta \\ \sigma^4 \xi
  \end{pmatrix},
  \qquad
  F
  \begin{pmatrix}
    \xi \\ \eta
  \end{pmatrix}
  =
  \begin{pmatrix}
    \eta \\ \xi
  \end{pmatrix},
  \qquad
  C
  \begin{pmatrix}
    \xi \\ \eta
  \end{pmatrix}
  =
  \begin{pmatrix}
    \eta \\ \xi
  \end{pmatrix};
\end{equation*}
finally, $\xi = i \psi_3(\vec{x})$ satisfies
\begin{equation*}
  R_6\xi = -\xi, \qquad F\xi = -\xi, \qquad C\xi = \xi.
\end{equation*}

Perturbing the operator $H_0$ by a weak potential
$\varepsilon q(\vec{x})$
which has all the symmetries $\{R,\,V,\,F,\,C\}$ will split
this group of 6 eigenvalues into 4 groups corresponding to the above
representations.

\section{Perturbation around a degenerate point with $F$ symmetry}
\label{sec:F_perturbation}

It is interesting to calculate the matrices $h_1, h_2$ if the
degenerate eigenspace has $F$ symmetry.  Suppose the basis is chosen
such that
\begin{equation*}
  F f_1 = f_1, \qquad F f_2 = -f_2.  
\end{equation*}
This can be done at the special point $K$ if the operator has $R$
symmetry; in section~\ref{sec:keepingF} we showed that this
situation survives even if we weakly break the symmetry $R$.

In this case, Lemma~\ref{lem:intertwiner} yields
\begin{equation}
  \label{eq:deriv_matrix_conj_flip}
  \begin{pmatrix}
    1 & 0 \\ 0 & -1
  \end{pmatrix}
  h_{\vec{k}} 
  \begin{pmatrix}
    1 & 0 \\ 0 & -1
  \end{pmatrix}
  = h_{\hat{F}\vec{k}},
  \qquad
  \hat{F} = 
    \begin{pmatrix}
    0 & -1\\
    -1 & 0
  \end{pmatrix}.
\end{equation}

It is easiest to evaluate $h_{\vec{k}}$ in the direction $\vec{k}_e =
(1, -1)^T$, which is an eigenvector of $\hat{F}$ with eigenvalue 1,
and in the direction $\vec{k}_o = (1,1)^T$, which is an eigenvector of
$\hat{F}$ with eigenvalue $-1$.  Remembering that $h_{-\vec{k}} =
-h_{\vec{k}}$, we get
\begin{equation}
  \label{eq:h_flip_ans}
  h_{\vec{k}_e} = 
  \begin{pmatrix}
    a & 0 \\ 0 & c
  \end{pmatrix},
  \qquad
  h_{\vec{k}_o} = 
  \begin{pmatrix}
    0 & b \\ \cc{b} & 0
  \end{pmatrix},
\end{equation}
In particular, the trace of the derivative matrix in the direction
perpendicular to the symmetry line $k_2=-k_1$ is zero and thus the
cone can only be tilted in the direction of the symmetry line.  If $R$
symmetry is present, there is no tilt, as mentioned above.



\section{Berry phase around a conical point}
\label{sec:berry}

Here, for completeness, we give a proof of the fact that the Berry
phase around a nondegenerate conical point is $\pi$, which has been
formulated as Lemma~\ref{lem:minus_one}.  The proof is geometrical in
nature and avoids the direct computation used in the original articles
\cite{HerLon_dfs63,Ber_prsla84}.

Presence of the antiunitary symmetry $\Vbar$ which squares to $-1$
allows us to choose special bases for eigenspaces.  We will be using the
following lemma.

\begin{lemma}
  \label{lem:special_phase}
  Let $A$ be an antiunitary involution on a separable Hilbert space
  $X$.  Then
  \begin{enumerate}
  \item there is an orthonormal basis $\{f_j\}$ of vectors such that
    \begin{equation}
      \label{eq:phase1}
      Af_j = f_j.  
    \end{equation}
     \item if $\dim(X)=2$, there exists a basis $\{\psi, A\psi\}$.
  \end{enumerate}
\end{lemma}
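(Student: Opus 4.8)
The plan is to exploit the fact that an antiunitary involution $A$ equips $X$ with a real structure: the set of its fixed vectors is a real Hilbert space whose complexification is $X$. Part (1) is then the existence of an orthonormal basis of that real Hilbert space, and part (2) is an elementary consequence.

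First I would introduce $X_{\R} := \{x \in X : Ax = x\}$. Being the kernel of the bounded real-linear map $I - A$, it is a closed real-linear subspace of $X$; moreover the antiunitarity identity $\langle u,v\rangle = \langle Au,Av\rangle = \langle v,u\rangle$ shows that $\langle u,v\rangle \in \R$ for all $u,v \in X_\R$, so $X_\R$ is a real Hilbert space under the restricted inner product. Next I would record the decomposition $X = X_\R \oplus iX_\R$ as real vector spaces: for any $x \in X$ the vectors $(x+Ax)/2$ and $(x-Ax)/(2i)$ both lie in $X_\R$ (using $A^2 = I$ and the antilinearity of $A$ applied to the scalar $1/(2i)$), and they combine with coefficients $1$ and $i$ to give $x$; the sum is direct because $Ax = x$ together with $Ax = -x$ force $x = 0$.

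For part (1): since $X$ is separable, so is $X_\R$, so I may choose an orthonormal basis $\{f_j\}$ of the real Hilbert space $X_\R$. By construction $Af_j = f_j$, the $f_j$ are orthonormal in $X$, and the decomposition $X = X_\R \oplus iX_\R$ shows that every $x \in X$ lies in the closed complex-linear span of $\{f_j\}$; hence $\{f_j\}$ is an orthonormal basis of $X$. For part (2): apply part (1) to the two-dimensional space $X$ to obtain $\{f_1,f_2\}$ with $Af_j = f_j$, and set $\psi := (f_1 + i f_2)/\sqrt{2}$. Then $A\psi = (f_1 - i f_2)/\sqrt{2}$, and a one-line computation using $\langle f_1, f_2\rangle = 0$ gives $\norm{\psi} = \norm{A\psi} = 1$ and $\langle \psi, A\psi\rangle = 0$, so $\{\psi, A\psi\}$ is an orthonormal basis of $X$, in particular a basis.

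There is no genuine obstacle in this argument; the only point deserving care is the verification that a maximal orthonormal family in the real Hilbert space $X_\R$ is still complete when regarded as a family in the complex Hilbert space $X$, and this is exactly where the decomposition $X = X_\R \oplus iX_\R$, together with separability of $X$, is used. (One may also note that the hypothesis $A^2 = I$ is essential for part (1): for an antiunitary $A$ with $A^2 = -I$ no nonzero fixed vector exists, while instead $\psi \perp A\psi$ automatically for every $\psi \neq 0$.)
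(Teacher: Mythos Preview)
Your proof is correct and follows essentially the same approach as the paper: both exploit that $x+Ax$ and $i(x-Ax)$ are $A$-fixed and that the inner product of $A$-fixed vectors is real, and part~(2) is literally identical (setting $\psi=(f_1+if_2)/\sqrt{2}$). The only difference is packaging---you phrase part~(1) via the real structure $X_\R=\ker(I-A)$ and the decomposition $X=X_\R\oplus iX_\R$, while the paper starts from an arbitrary basis, forms the fixed vectors $\psi_j+A\psi_j$ and $i(\psi_j-A\psi_j)$, and runs Gram--Schmidt directly.
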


\begin{proof}
  To prove the first part, we start with an arbitrary basis $\{\psi_j\}$.
  Then the vectors
  \begin{equation*}
    f_j^+ = \psi_j + A\psi_j, \qquad \mbox{and} \qquad
    f_j^- = i(\psi_j - A\psi_j)
  \end{equation*}
  both satisfy $Af = f$ and have the vector $\psi_j$ in their span.
  Therefore, the set $\{f_j^+, f_j^-\}$ spans the whole space and can
  be made into a orthonormal basis by applying the Gram-Schmidt process.
  This preserves property \eqref{eq:phase1} since all coefficients
  arising in the process are real:
  \begin{equation*}
\langle f, f'\rangle
= \langle Af, Af'\rangle
= \cc{\langle f, f'\rangle } \in \R.
  \end{equation*}

  To get the second part from the first we start with the orthonormal
  basis $\{f_1,f_2\}$ satisfying \eqref{eq:phase1} and then take
  \begin{equation*}
    \psi = (f_1 + i f_2)/\sqrt{2}, \qquad 
    A\psi = (f_1 - i f_2)/\sqrt{2},
  \end{equation*}
  which can be checked to be orthonormal.
\end{proof}

Now we are in the position to prove Lemma~\ref{lem:minus_one}.

\begin{proof}[Proof of Lemma~\ref{lem:minus_one}]
  Representing the parameters around the location of the conical point
  in polar form we will study the limiting eigenvectors
  \begin{equation}
    \label{eq:limiting_bundle}
    \psi_0^\pm(\theta) = \lim_{r\to0} \psi^\pm(r,\theta),
  \end{equation}
  where $\psi^-$ and $\psi^+$ are the eigenvectors of the lower and
  upper branches of the cone, correspondingly.  We normalize these
  eigenvectors and fix the phase to have 
  \begin{equation}
    \label{eq:phase_fix}
    \Vbar \psi^\pm = \psi^\pm.  
  \end{equation}
  Because the cone
  is nondegenerate (and thus $\left|\lambda_1^+(\theta) -
    \lambda_1^-(\theta)\right|>0$), the limit exists and is continuous
  in $\theta$, see equation \eqref{eq:pert_eigenvectors}.

  \begin{figure}[t]
    \centering
    \includegraphics{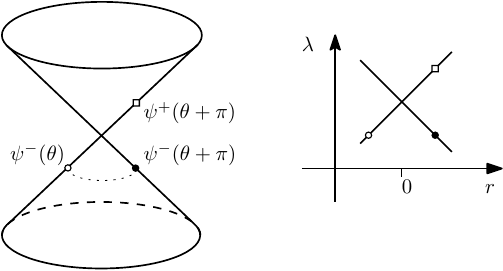}
    \caption{Cone with a schematic representation of a circular
      contour (left); a cross-section of the cone by a plane through
      $\lambda$ axis in the direction $\phi$ (right).}
    \label{fig:line_bundle}
  \end{figure}

  The functions $\psi_0^\pm(\theta)$ have a curious property: since
  the section of the cone by a vertical plane is two intersecting
  lines, Fig.~\ref{fig:line_bundle}, the vector $\psi_0^+(\theta+\pi)$
  is the same as $s_1 \psi_0^-(\theta)$, where $s_1 = \pm1$.

  We expand $\psi_0^\pm$ in a fixed basis of eigenvectors at the
  conical point, which we can choose to be of the form  $\{\phi, \Vbar\phi\}$,
  \begin{equation*}
    \psi_0^\pm = \alpha^\pm(\theta) \phi + \beta^\pm(\theta) \Vbar\phi.
  \end{equation*}
  From condition \eqref{eq:phase_fix} we immediately get $\beta^\pm =
  \cc{\alpha^\pm}$.  On the other hand, the vectors $\psi_0^+$ and
  $\psi_0^-$ are orthogonal, leading to the condition
  \begin{equation*}
    \cc{\alpha^+} \alpha^- + \alpha^+ \cc{\alpha^-} = 0
    \qquad \mbox{or} \qquad
    \cc{\alpha^+} \alpha^- \in i\R.
  \end{equation*}
  From normalization of $\psi_0^\pm$, we conclude that $\alpha^- = i
  \alpha^+ s_2$, where $s_2=\pm1$.  We therefore get
  \begin{equation*}
    \alpha^+(\theta+\pi) = \alpha^-(\theta) s_1 
    = i \alpha^+(\theta)s_1 s_2,
  \end{equation*}
  and, therefore,
  \begin{equation*}
    \alpha^+(\theta+2\pi) = (i s_1 s_2)^2 \alpha^+(\theta)= - \alpha^+(\theta).
  \end{equation*}
\end{proof}

We remark that in the proof above, the overall sign $s_1
s_2$ determines the direction of rotation of the vectors
$\psi_0^\pm(\theta)$ in the two-dimensional space.

\section*{Acknowledgment}

We would like to thank Peter Kuchment for introducing us to the
remarkable paper \cite{FefWei_jams12} which was the starting point for
our exploration.  Rami Band, Ngoc Do, Peter Kuchment and Alim
Sukhtayev patiently listened to our sometimes confused explanations
and provided encouragement, deep suggestions and corrections.  We are
grateful to Yves Colin de Verdi\`ere for his interest in the project
and many helpful suggestions.  Chris Joyner helped us interpret
``strange'' representations
\eqref{eq:RV_representations1}-\eqref{eq:RV_representations2} as
corepresentations of Wigner and gave us a crash course on classifying
them.  Charles Fefferman and Michael Weinstein drew our attention to
several omissions and pointed out the finer points of their results.
We are deeply thankful to all the above individuals.  GB was partially
supported by NSF grant DMS-1410657.
The research of Andrew Comech was carried out
at the Institute for Information Transmission Problems
of the Russian Academy of Sciences
at the expense of the Russian Foundation
for Sciences (project 14-50-00150).

\bibliographystyle{abbrv}
\bibliography{graphene}

\end{document}